\documentclass[11pt,reqno]{amsart}

\usepackage{amsmath,amsthm,amssymb,physics}
\usepackage[colorlinks=true, linkcolor=black, urlcolor=black, citecolor=black, hypertexnames=false]{hyperref}
\usepackage{xcolor}
\usepackage{caption}
\usepackage{subcaption}
\usepackage{parskip}
\usepackage[absolute,overlay]{textpos}

\definecolor{darkblue}{HTML}{1F4FD8}

\usepackage{amssymb}
\usepackage{amsmath}
\usepackage{enumerate}
\usepackage{booktabs}
\usepackage{dsfont}
\usepackage{pgfplots}
\usepackage{nicefrac}
\usepackage{rotating}
\usepackage{siunitx}

\usepackage[round,authoryear]{natbib}

\newtheorem{theorem}{Theorem}[section]
\newtheorem{lemma}[theorem]{Lemma}
\newtheorem{proposition}[theorem]{Proposition}

\theoremstyle{definition}

\newtheorem{assumption}[theorem]{Assumption}
\newtheorem{example}[theorem]{Example}

\theoremstyle{remark}
\newtheorem*{remark}{Remark}       

\renewcommand{\P}{\mathbb{P}}
\newcommand{\R}{\mathbb{R}}

\newcommand{\N}{\mathbb{N}}
\newcommand{\E}{\mathbb{E}}

\newcommand{\F}{\mathcal{F}}

\newcommand{\A}{\mathcal{A}}

\newcommand{\1}{\mathbf{1}}

\renewcommand{\l}{\left}
\renewcommand{\r}{\right}

\allowdisplaybreaks

\title[Optimal Market Making in Prediction Markets]{Optimal Market Making in Prediction Markets}

\author{Dominik Feil}
\address{University of Konstanz, Department of Mathematics and Statistics, 78457 Konstanz, Germany}
\email{dominik.feil@uni-konstanz.de}

\author{Max Nendel}
\address{University of Waterloo, Department of Statistics and Actuarial Science, Waterloo, ON N2L 3G1, Canada}
\email{mnendel@uwaterloo.ca}

\date{\today}
\thanks{The authors thank Michael Kupper for valuable comments as well as his guidance and support related to this work.\ The second-named author gratefully acknowledges financial support from the Natural Sciences and Engineering Research Council of Canada via Discovery Grant no.\ RGPIN-2025-04219}

\begin{document}
\begin{abstract}
   Prediction markets are attracting growing attention as trading volumes rise and their practical relevance increases.\ To ensure efficient price discovery, liquidity provision becomes ever more important.\ Due to the binary settlement structure in prediction markets, optimal market making leads to an optimization problem that is fundamentally different from the ones studied in classical settings.\ In this paper, we develop a stochastic control framework for prediction markets in which the market price is modeled as a conditional probability of the outcome that is generated by a transformed latent belief diffusion.\ A market maker selects bid and ask quotes to maximize expected terminal wealth while controlling both mark-to-market inventory risk and the settlement risk of remaining positions at resolution.\ We derive the associated Hamilton--Jacobi--Bellman equation and characterize the unique optimal bid and ask quotes.\ By transforming the equation to the latent belief space and using a fixed-point argument, we prove existence and uniqueness of a classical solution and verify the resulting optimal quoting strategy.\ In addition, we provide a numerical analysis, which reveals how optimal liquidity provision in prediction markets depends on inventory, market beliefs, time to resolution, and risk aversion. Further, we demonstrate that the optimal quoting strategy substantially improves downside protection while preserving most of its expected profit relative to a myopic benchmark that maximizes the instantaneous expected mark-to-market profit.

    \smallskip
    \noindent \textit{Key words:}\ Market making, prediction market, stochastic optimal control, Hamilton--Jacobi--Bellman equation.
    \smallskip

    \noindent \emph{MSC 2020 Classification:}\ Primary:\ 91B70; 93E20; Secondary:\ 49L12; 60G55; 91G80.
\end{abstract}

\maketitle
\section{Introduction}
In this article, we study optimal market making in prediction markets. Prediction markets are financial exchanges where participants trade contingent claims on uncertain future events, and a standard contract pays \$1 if the event occurs at or prior to a specified resolution time and \$0 otherwise.\ To facilitate trading and support efficient price discovery, prediction markets require a sufficient supply of liquidity.\ Understanding how a market maker should quote in such markets is therefore both economically and mathematically relevant.\ The aim of this paper is to develop a market making model tailored to the distinctive features of prediction markets and to derive the corresponding optimal quoting strategy.

\subsection{Prediction Markets}
Due to the binary settlement structure of prediction markets, a contract's trading price is commonly regarded as the market-implied probability of the underlying event.\footnote{To reduce opportunity costs and thereby enable accurate long-term forecasting, exchanges may remunerate locked collateral.\ For example, as of July 19, 2026, \emph{Polymarket} offers an annualized rate of 3.25\% on eligible positions in certain long-dated markets, such as ``Presidential Election Winner 2028.''}\ For instance, consider a contract paying \$1 if candidate A wins an upcoming election and \$0 otherwise.\ If this contract trades at 30\textcent, its price is commonly interpreted as implying a 30\% probability of candidate A's victory.

To allow participants to express views in both directions, prediction markets list complementary contracts. In addition to the claim described above, there is a second security paying \$1 if candidate A does not win and \$0 otherwise. These two contracts correspond to mutually exclusive and exhaustive outcomes, and their payoffs sum to one in every state.

Prediction markets have strong predictive power across a variety of forecasting tasks.\ For instance, \citet{BergNelsonRietz2008} show that prediction markets outperform opinion polls in forecasting the vote shares of the two major parties in U.S.\ presidential elections.\ More recently, similar evidence was reported for the 2024 U.S.\ presidential election, where prediction market returns were found to predict subsequent polling data \citep{NgEtAl2026}. Beyond electoral forecasting, prediction markets have also been used successfully to predict infectious disease activity \citep{PolgreenEtAl2007} and to support corporate decision-making processes \citep{CowgillZitzewitz2015}.

    Today, prediction market prices provide probability estimates for a broad range of future events, for example, related to elections, weather and climate, geopolitics, monetary policy, sports, technological developments, and corporate earnings reports.

\subsubsection{Automated Market Makers}
An important strand of the prediction market literature studies automated market makers (AMMs), cf.\ \citet{Hanson2003, ChenPennock2007, AbernethyEtAl2013}.\ Rather than relying on direct matching between traders, such mechanisms provide liquidity through a predetermined pricing rule.

A standard formulation is based on a cost function
\[
    C\colon\R^n \to \R,
\]
defined on the vector of outstanding shares $q=(q_1,\dots,q_n)$, where $q_i$ denotes the total number of shares of outcome $i$ held by traders. Marginal prices are then given by the gradient
\[
    p(q) := \nabla C(q),
\]
and the cost of purchasing a bundle $\Delta$ is
\[
    C(q+\Delta)-C(q).
\]
For a binary event, we have $n=2$, corresponding to one outcome and its complement.

\subsubsection{The Shift to Limit Order Books}
While cost-function-based automated market makers played a central role in early prediction market design, major platforms now rely on limit order books. In a limit order book, participants post buy and sell orders specifying prices and quantities. Transactions occur when incoming orders match standing quotes.

This market design differs fundamentally from cost-function-based automated market makers. In AMM-based systems, prices are determined by a pricing rule and adjust as traders transact with the mechanism. When new information arrives, agents must trade in order to move prices towards the new fair value.\ By contrast, in a limit order book, liquidity providers can revise or cancel their quotes in response to new information.\ Prices may therefore change even in the absence of trades. The bid-ask spread, which compensates liquidity providers for bearing risk, varies over time as market conditions evolve. Liquidity provision thus becomes an active, profit-seeking activity rather than a deterministic component of the market design.

Prediction market platforms operating through limit order books, such as \emph{Kalshi} and \emph{Polymarket}, have grown substantially in recent years. For example, cumulative trading volume in \emph{Kalshi}'s 2024 U.S.\ presidential election winner market exceeded \$500 million, while \emph{Polymarket}'s corresponding market recorded about \$3.7 billion in traded volume.

The efficiency and stability of such markets depend on market makers who supply liquidity.

\subsection{Market Making in Limit Order Books}
A market maker provides liquidity by continuously posting bid and ask quotes at which they are willing to buy and sell a given asset. In doing so, the market maker may earn the bid-ask spread, but is also exposed to inventory risk.\ To manage this risk and adapt to new information, the market maker dynamically adjusts quotes in response to inventory and market conditions.

\subsubsection{The Market Making Problem}
The central problem of a market maker is to choose bid and ask quotes so as to maximize expected profit while controlling risk. A fundamental trade-off arises between the spread captured per transaction and the execution frequency. Tighter quotes increase the execution intensity but reduce the profit per trade, whereas wider quotes increase the margin but reduce the likelihood of execution.

In addition, inventory considerations create a source of quote asymmetry. A market maker with a long inventory position typically quotes more aggressively on the ask side and less aggressively on the bid side, while the opposite adjustment applies to a short position. This asymmetry is referred to as skew.

\subsubsection{Optimal Market Making Models}
The decision problem of a market maker naturally leads to a stochastic control formulation in which bid and ask quotes serve as controls, while the price, inventory, cash position, and possibly additional variables constitute the state of the system.

Building on \citet{HoStoll1981}, the seminal paper by \citet{AvellanedaStoikov2008} studies the optimal control problem of a single-asset market maker who seeks to maximize the expected utility of terminal wealth under constant absolute risk aversion. The authors model the mid-price process as a Brownian motion with constant volatility and assume that liquidity-taking buy and sell orders arrive with intensities of the form $A\exp(-k\delta)$, where $A$ and $k$ are positive constants and $\delta$ denotes the quote offset relative to the mid-price.

\citet{GueantLehalleFernandezTapia2013} modify this framework by introducing inventory limits and show that the resulting four-dimensional Hamilton--Jacobi--Bellman equation can be reduced to a system of linear ordinary differential equations. They also derive closed-form approximations of the optimal quotes. Moreover, \citet{Gueant2017} develops an extension to multi-asset market making.

Further contributions expand the literature on optimal market making along several dimensions.\ \citet{GuilbaudPham2013} allow the market maker to trade using both limit and market orders, while \citet{CarteaJaimungal2015} propose risk metrics for assessing and fine-tuning high-frequency trading strategies.\ To capture richer order flow dynamics, \citet{CarteaJaimungalRicci2014} introduce a mutually exciting process to allow for feedback effects in market orders, while \citet{Jusselin2021} studies optimal market making under order flow driven by general Hawkes processes.\ Other extensions address model uncertainty \citep{NystromOuldAlyZhang2014,CarteaDonnellyJaimungal2017} and options market making \citep{AbergelElAoud2015,BaldacciBergaultGueant2021}. More recently, \citet{BarzykinBergaultGueant2023} allow the market maker to hedge inventory in an external liquidity pool, subject to execution costs and market impact.

To the best of our knowledge, optimal market making in prediction markets has not yet been studied within a stochastic control framework.

\subsection{Main Contributions}
We develop a market making model tailored to prediction markets. In contrast to the classical literature, we treat prices as conditional probabilities taking values in $(0,1)$. The price dynamics are generated through a nonlinear transformation of a latent belief process that aggregates information over time and is specified so that the resulting price process is a martingale.\ The volatility of the price process is allowed to depend on both time and the current price.\ At a fixed terminal time, the event resolves according to a Bernoulli random variable whose distribution is determined by the market's terminal belief. Order arrival intensities depend on time, the current price, and the quotes, which are constrained to lie in the interval $[0,1]$.

Within this framework, we formulate the market making problem as a stochastic optimal control problem in which the agent seeks to maximize expected terminal wealth subject to a running inventory penalty and a terminal settlement risk penalty.\ We formulate the associated Hamilton--Jacobi--Bellman equation and reduce its dimensionality from four to three.\ Theorem \ref{thm.ex.unique.classical} establishes the existence of a classical solution by transforming the equation into the latent space.\ In Theorem \ref{thm:existence_uniqueness_mild}, we first prove existence and uniqueness of a sufficiently regular mild solution using a fixed-point argument and a priori estimates for the H\"older norm, which then allows us to invoke the results of \cite[Chapter 9]{Krylov1996} to show that the mild solution is a classical solution.\ Uniqueness of the classical solution is deduced from uniqueness of the mild solution by a standard argument, showing via It\^o's formula that every classical solution is a mild solution.\ The existence and uniqueness of a classical solution to the Hamilton--Jacobi--Bellman equation enables us to derive and verify the optimal quoting strategy, cf.\ Proposition \ref{lem:H_unique_maximizer_pred_model_general} and Theorem \ref{thm:verification}, respectively.

We then study the model quantitatively by numerically approximating the solution and investigating the behavior of the optimal quotes.\ In particular, we examine how the optimal strategy is shaped by inventory, time to settlement, price, and risk aversion.\ The analysis highlights structural features of prediction markets, including skew effects arising from asymmetric intensity specifications and the diminishing importance of inventory risk as prices approach zero or one.

Finally, we run a Monte Carlo simulation to compare the optimal strategy with a myopic baseline strategy that always quotes to maximize the instantaneous expected mark-to-market profit. We find that the optimal strategy achieves a substantial reduction in risk while sacrificing only a small portion of expected profit.

The remainder of this paper is organized as follows. Section~\ref{sec:market_model_and_control_problem} introduces the prediction market model, the optimization problem, and the corresponding Hamilton--Jacobi--Bellman equation.\ Section~\ref{sec:main_results} derives and verifies the market maker's optimal quoting strategy, cf.\ Proposition \ref{lem:H_unique_maximizer_pred_model_general} and Theorem \ref{thm:verification}, respectively.\ Moreover, we provide sufficient conditions to establish existence and uniqueness of a classical solution to the Hamilton--Jacobi--Bellman equation, cf.\ Theorem \ref{thm.ex.unique.classical}.\ Section~\ref{sec:numerical_analysis_optimal_quotes} presents a numerical analysis of the resulting strategy.
Finally, Section~\ref{sec:conclusion} concludes.

\section{Market Model and Control Problem}\label{sec:market_model_and_control_problem}
In this section, we develop a stochastic control approach to optimal market making in prediction markets that operate via a limit order book. We consider a market maker who provides liquidity for two complementary contracts written on an event.\ One contract pays \$1 if the event occurs and \$0 otherwise, while the other pays \$1 if the event does not occur and \$0 otherwise.\ On the exchange, a bid for one contract is also recorded as an ask for the complementary contract, and vice versa. The problem therefore reduces to liquidity provision in a single contract with short selling allowed.

\subsection{The Model}\label{sec:model}
Throughout, let $T>0$ and $(\Omega,\F,\mathbb{F},\P)$ be a filtered probability space carrying a standard Brownian motion $(W_t)_{t\in[0,T]}$, {adapted to the filtration $\mathbb{F}=(\F_t)_{t\in[0,T]}$, which is assumed to satisfy the usual conditions.} 

We consider a security written on an event that resolves immediately after time $T$. The contract is traded on the time interval $[0,T]$ and its settlement is modeled as a random variable $Y$ taking values in $\{0,1\}$.\ 

We assume that there exists an $\mathcal{F}_T$-measurable random variable $p_T \in (0,1)$ such that
\begin{equation*}
    \P(Y=1 \mid \F_T)=p_T
    \qquad\text{and}\qquad
    \P(Y=0 \mid \F_T)=1-p_T.\footnote{Alternatively, one may enlarge the probability space to support a random variable $U$, uniformly distributed on $(0,1)$ and independent of $\F_T$, and define $Y:=\1_{\{U\le p_T\}}$.}
\end{equation*}
Thus, $p_T$ represents the probability of the tradable outcome immediately before the outcome is revealed. The information available after resolution is described by
\[
    \mathcal G_T:=\mathcal F_T\vee\sigma(Y).
\]
In particular, $Y$ is $\mathcal G_T$-measurable but not $\mathcal F_T$-measurable.

We denote the probability process of the outcome by
\begin{equation*}
    p_t := \E[p_T \mid \F_t] \qquad \text{for } t\in[0,T].
\end{equation*}
Then $(p_t)_{t\in[0,T]}$ is an $\mathbb{F}$-martingale with values in $(0,1)$. By construction, $p_t$ represents the probability of a payout of one, given the information available at time $t$. Assuming market efficiency, we identify $p_t$ with the market price of the contract. We will also refer to $p_t$ as the market belief at time $t\in[0,T]$.

\subsubsection{Information and Price Dynamics}
We first model a real-valued latent belief process $(L_t)_{t\in[0,T]}$, which is then transformed into the market belief process $(p_t)_{t\in[0,T]}$ via a nonlinear map $f\colon\R\to(0,1)$ such that
\[
    p_t := f(L_t)
\]
for all $t\in[0,T]$. We assume that $f\in C^2\big(\R;(0,1)\big)$ satisfies
\[
    \lim_{x\to-\infty} f(x)=0, 
    \qquad 
    \lim_{x\to\infty} f(x)=1,
\]
and $f'(x)>0$ for all $x\in\R$. In addition, defining $g\colon\R\to\R$ by
\[
    g(x) := \frac{f''(x)}{f'(x)},
\]
we assume that $g$ is bounded and Lipschitz continuous.\ Notice that $g$ coincides, up to sign, with the usual Arrow--Pratt coefficient of absolute risk aversion. Under these assumptions, both $f'$ and $f''$ are bounded and Lipschitz continuous, as shown in Lemma \ref{lem:f'_bounded}.

The latent belief $L_t$ can be interpreted as a real-valued summary of the market's aggregate information at time $t$ about the settlement parameter, while the map $f$ converts the real-valued information summary into a price. We model $(L_t)_{t\in[0,T]}$ via the stochastic differential equation
\begin{equation}\label{eq:sde.latent}
    dL_t = \mu(t,L_t)\,dt + \sigma(t,L_t)\,dW_t,
\end{equation}
where $\mu\colon [0,T]\times\R\to\R$ is a priori measurable and $\sigma\colon [0,T]\times\R\to(0,\infty)$ is continuous.\ Moreover, we assume that there exists a constant $C>0$ such that
\begin{equation}\label{eq:lipschitz_coefficients}
    |\sigma(t,x)-\sigma(t,y)| \le C|x-y|\quad \text{and}\quad |\sigma(t,x)| \le C
\end{equation}
for all $t\in [0,T]$ and $x,y\in \R$.

In a first step, we derive the appropriate choice of $\mu$ to ensure that $(p_t)_{t\in[0,T]}$ satisfies the martingale property. Formally applying Itô's formula to $p_t=f(L_t)$ yields
\[
dp_t
=
\big(
f'(L_t)\mu(t,L_t)
+\tfrac12 f''(L_t)\sigma(t,L_t)^2
\big)\,dt
+
f'(L_t)\sigma(t,L_t)\,dW_t.
\]
For $(p_t)_{t\in[0,T]}$ to be a martingale, its drift must vanish. Hence, we set
\begin{equation}\label{eq:def.mu}
\mu(t,x)
:=
-a(t,x)g(x)\quad \text{with}\quad a(t,x):=\frac12 \sigma(t,x)^2
\end{equation}
for all $(t,x)\in[0,T]\times\R$.\ Since $\sigma$ is bounded and Lipschitz continuous in the space variable, it follows that $a$ is bounded and Lipschitz continuous in the space variable.\ Therefore, $\mu$ is bounded and Lipschitz continuous in the space variable as $g=\frac{f''}{f'}$ is, by assumption, bounded and Lipschitz continuous.\ We then obtain
\[
dp_t = f'(L_t)\sigma(t,L_t)\,dW_t = \varsigma(t,p_t)\,dW_t,
\]
where $\varsigma\colon[0,T]\times(0,1)\to\R$ is given by
\begin{equation}\label{eq:def.varsigma}
    \varsigma(t,p) := f'\!\l(f^{-1}(p)\r)\sigma\!\l(t,f^{-1}(p)\r)\!.
\end{equation}
Since $f'$ and $\sigma$ are both bounded, $\varsigma$ is bounded as well.\ In addition, since $g=\frac{f''}{f'}$ is bounded and $\sigma$ is bounded and globally Lipschitz continuous in the space variable, one readily verifies that $\varsigma$ is globally Lipschitz continuous in the market belief $p\in (0,1)$.\ Moreover, $(p_t)_{t\in[0,T]}$ is a bounded continuous local martingale, and hence a martingale.

Due to assumption \eqref{eq:lipschitz_coefficients} and the definition of $\mu$ in \eqref{eq:def.mu}, for every $t\in [0,T]$ and $x\in \R$, the stochastic differential equation \eqref{eq:sde.latent} has a unique strong solution $(L_s^{t,x})_{s\in [t,T]}$ with $L_t^{t,x}=x$ $\P$-a.s.\ Since the function $f$ is bijective, the same holds for the dynamics of the market price, and we use the notation $(p_s^{t,p})_{s\in [t,T]}$ for the market price starting from $p\in (0,1)$ at time $t\in [0,T]$, i.e., $p_t^{t,p}=p$ $\P$-a.s. 

A natural choice for the transformation $f$ is the logistic function, which satisfies all of the previously stated assumptions.
\begin{example}[Logistic Transformation]
Consider the logistic function $f\colon\R\to (0,1)$ given by
\[
f(x):=\frac{1}{1+e^{-x}}.
\]
Indeed, we have $f\in C^2\big(\R;(0,1)\big)$,
\[
\lim_{x\to-\infty} f(x)=0, \qquad \lim_{x\to+\infty} f(x)=1,
\]
and
\[
    f'(x)=f(x)\big(1-f(x)\big)>0
\]
for all $x\in\R$. Moreover,
\[
    f''(x)=f'(x)\big(1-2f(x)\big),
\]
so that
\[
g(x):=\frac{f''(x)}{f'(x)}=1-2f(x)=-\tanh\Big(\frac{x}{2}\Big).
\]
In particular, $g$ is bounded and Lipschitz continuous.\ Hence, the logistic function satisfies all assumptions imposed on $f$.

In this case, the latent belief process is given by
\[
    dL_t=\frac12 \tanh\l(\frac{L_t}{2}\r)\sigma(t,L_t)^2\,dt
    +\sigma(t,L_t)\,dW_t,
\]
and the corresponding price process evolves according to
\[
    dp_t=p_t(1-p_t)\sigma\!\l(t,\ln\frac{p_t}{1-p_t}\r)\,dW_t.
\]
Moreover, the latent belief at time $t$ is given by
\[
    L_t=\ln\frac{p_t}{1-p_t},
\]
and therefore coincides with the log-odds of the market-implied probability $p_t$.\ This is closely related to the logarithmic market scoring rule introduced by \citet{Hanson2003}, under which, in the binary case, the log-odds are proportional to the net outstanding shares.\ The recent work by \citet{Dalen2026}
likewise adopts the price log-odds as a real-valued belief.
\end{example}

\subsubsection{Cash and Inventory Dynamics}
At any time $t$, the market maker posts bid and ask quotes $\pi_t^b, \pi_t^a\in[0,1]$ at which the agent is willing to buy or sell the traded security. We assume that trades occur in fixed sizes $\Delta>0$. Let $(N^b_t)_{t\in[0,T]}$ and $(N^a_t)_{t\in[0,T]}$ be counting processes representing executions at the bid and ask, respectively. Their construction is specified in Section~\ref{sec:order_intensities}.

The cash process $(X_t)_{t\in[0,T]}$ evolves according to
\[
    dX_t = \Delta\pi_t^a\, dN^a_t
    - \Delta\pi_t^b\,dN^b_t,
\]
while the inventory process $(q_t)_{t\in[0,T]}$ satisfies
\[
    dq_t = \Delta\, dN_t^b - \Delta\, dN_t^a.
\]
We impose an inventory constraint by restricting $q_t$ to the finite grid
\[
    \mathcal{Q}:=\big\{-Q,-Q+\Delta,\ldots,Q-\Delta,Q\big\},
\]
where $Q>0$ is an arbitrary multiple of $\Delta$.

\subsubsection{Order Intensities}\label{sec:order_intensities}
We assume that the filtered probability space additionally supports
two mutually independent $\mathbb F$-Poisson random measures $M^b$ and $M^a$ on
$[0,T]\times[0,\infty)$, each
with compensator $ds\, dz$, and independent of the Brownian motion $W$.

For an $\mathbb{F}$-predictable quoting strategy
$\pi=(\pi^b,\pi^a)$ taking values in $[0,1]^2$, the bid and ask
execution processes are defined by
\[
\begin{aligned}
N_t^b
&=
\int_{(0,t]}\int_{[0,\infty)}\1_{\{z\le \Lambda^b(s,p_s,\pi_s^b)\}}\1_{\{q_{s-}<Q\}}\,M^b(ds,dz),\\
N_t^a
&=
\int_{(0,t]}\int_{[0,\infty)}\1_{\{z\le \Lambda^a(s,p_s,\pi_s^a)\}}\1_{\{q_{s-}>-Q\}}
\,M^a(ds,dz),
\end{aligned}
\]
where
\[
q_t=q_0+\Delta N_t^b-\Delta N_t^a.
\]
Thus, the execution processes have the $\mathbb{F}$-predictable
intensities
\[
\lambda_t^b
=
\Lambda^b(t,p_t,\pi_t^b)\mathbf 1_{\{q_{t-}<Q\}},
\qquad
\lambda_t^a
=
\Lambda^a(t,p_t,\pi_t^a)\mathbf 1_{\{q_{t-}>-Q\}},
\]
respectively, for two functions
\[
\Lambda^b,\Lambda^a
\colon
[0,T]\times(0,1)\times[0,1]\longrightarrow[0,\infty).
\]
The functions $\Lambda^b$ and $\Lambda^a$ are assumed to satisfy the following conditions:

\begin{enumerate}[(i)]
\item $\Lambda^b$ and $\Lambda^a$ are continuous and uniformly bounded,
\item $\Lambda^b$ and $\Lambda^a$ are twice continuously differentiable in $\pi$ on $(0,1)$,
\item for every $(t,p,\pi)\in[0,T]\times(0,1)\times(0,1)$, we have
\[
    \partial_\pi \Lambda^b(t,p,\pi)>0
    \qquad\text{and}\qquad
    \partial_\pi \Lambda^a(t,p,\pi)<0,
\]
\item $\Lambda^b$ and $\Lambda^a$ satisfy the curvature condition
\[
    \sup_{\pi\in(0,1)}
    \frac{\Lambda^\circ(t,p,\pi)\,\partial_{\pi\pi}^2 \Lambda^\circ(t,p,\pi)}
    {\bigl(\partial_\pi \Lambda^\circ(t,p,\pi)\bigr)^2}
    < 2
\]
for $\circ\in\{b,a\}$ and all $(t,p)\in[0,T]\times(0,1)$.
\end{enumerate}

\begin{remark}
    Although it is not needed mathematically, it is economically reasonable to choose $\Lambda^b$ and $\Lambda^a$ such that, for every $(t,p)\in[0,T]\times(0,1)$,
    \[
        \Lambda^b(t,p,0)=0
        \qquad
        \text{and}
        \qquad
        \Lambda^a(t,p,1)=0.
    \]
    This reflects the fact that trading activity vanishes at economically unreasonable price levels.\ Since the contract pays either \$0 or \$1, no rational market participant sells at a price of \$0 or buys at a price of \$1.
\end{remark}

\subsection{The Optimization Problem}
Let $\gamma>0$ denote the running risk aversion parameter and $\Phi\colon (0,1)\times \mathcal Q\to \R$ be a bounded continuous terminal penalty.\ The market maker chooses an admissible quoting strategy
$\pi\in\A$ to maximize
\begin{align*}
\E\bigg[
X_T + q_T Y +\Phi(p_T,q_T)
- \gamma\int_0^Tq_s^2 \varsigma(s,p_s)^2\,ds
\bigg],
\end{align*}
where $\A:=\A(0)$ and $\A(t)$ denotes the set of all predictable processes $\pi=(\pi^b_s,\pi^a_s)_{s\in[t,T]}$ taking values in \([0,1]^2\)
\(\P\otimes dt\)-almost everywhere for $t\in [0,T]$.

This objective balances expected terminal wealth against the risks associated with holding inventory. The term $X_T + q_T Y$ represents the market maker's terminal wealth, consisting of the terminal cash position and the settlement value of the remaining inventory. Prior to settlement, the mark-to-market value of the inventory is exposed to fluctuations in the market price.\ By penalizing its instantaneous variance rate $q_t^2 \varsigma(t,p_t)^2$, the running penalty discourages the market maker from maintaining large inventory positions, particularly when volatility is high.\

At settlement time $T$, any remaining inventory is additionally exposed to the binary settlement outcome.\ We account for this exposure through the terminal penalty $\Phi(p_T,q_T)$.\ A natural choice is to penalize the conditional variance 
$$\mathrm{Var}(q_T Y\mid \mathcal{F}_T)=q_T^2p_T(1-p_T)$$ of the settlement value at time $T$, that is,
\begin{equation}\label{eq:specialPhi}
\Phi(p,q):=-\gamma_T q^2 p(1-p),\qquad \text{for }p\in (0,1)\text{ and }q\in \mathcal Q,\tag{SV}
\end{equation}
with a terminal risk aversion parameter $\gamma_T>0$.\ Accordingly, the terminal penalty captures the settlement risk associated with inventory that has not been unwound prior to resolution.\ The parameters $\gamma$ and $\gamma_T$ allow the market maker to assign different weights to ongoing mark-to-market risk and terminal settlement risk.

Since $q_T$ is $\mathcal{F}_T$-measurable and $p_T=\E[Y\mid\mathcal{F}_T]$, we have $\E[q_TY]=\E[q_Tp_T]$.\
Consequently, the objective is equivalent to maximizing
\[
\E\!\l[
X_T + q_T p_T+\Phi(p_T,q_T)
- \gamma\int_0^T q_s^2 \varsigma(s,p_s)^2\,ds
\r]
\]
over all $\pi\in\mathcal{A}$.

\subsection{The Hamilton--Jacobi--Bellman Equation}
We now formulate the value function associated with the market maker's control problem and state the corresponding Hamilton--Jacobi--Bellman equation.\ We then reduce the four-dimensional equation to a three-dimensional one by exploiting the structure of the model and the objective functional.

The value function of the problem is given by
\begin{align}
\notag \Upsilon(t,p,q,x)
= \sup_{\pi\in\mathcal A(t)}
\E\Bigg[
X_T^{t,p,q,x,\pi} &+  q_T^{t,p,q,\pi} p_T^{t,p}
 +\Phi\big(p_T^{t,p},q_T^{t,p,q,\pi}\big)\\
&\qquad -\gamma\int_t^T\!\l(q_s^{t,p,q,\pi}\r)^2 \varsigma\big(s,p^{t,p}_s\big)^2\,ds\Bigg],\label{eq:def.value.function}
\end{align}
where $X_T^{t,p,q,x,\pi}$ denotes the cash position at time $T$ under the control $\pi$, starting from the initial state $(p,q,x)\in(0,1)\times\mathcal{Q}\times \R$ at time $t\in [0,T]$. The same notational convention is used for $q$.

The Hamilton--Jacobi--Bellman equation corresponding to our problem reads as
\begin{equation}\label{eq:HJB_pred_full}
\begin{aligned}
0
&= -\partial_t \Upsilon(t,p,q,x)
  - \frac12 \varsigma(t,p)^2\,\partial^2_{pp}\Upsilon(t,p,q,x)
  + \gamma q^2 \varsigma(t,p)^2 \\[1mm]
&\quad
  - \1_{\{q<Q\}}
    \sup_{\pi^b\in[0,1]}
      \Lambda^b(t,p,\pi^b)
      \l[\Upsilon(t,p,q+\Delta,x-\Delta\pi^b) - \Upsilon(t,p,q,x)\r] \\[1mm]
&\quad
  - \1_{\{q>-Q\}}
    \sup_{\pi^a\in[0,1]}
      \Lambda^a(t,p,\pi^a)
      \Big[\Upsilon(t,p,q-\Delta,x+\Delta\pi^a) - \Upsilon(t,p,q,x)\Big],
\end{aligned}
\end{equation}
for $(t,p,q,x)\in[0,T)\times (0,1)\times \mathcal Q\times \R$, with terminal condition
\[
    \Upsilon(T,p,q,x) = x + qp +\Phi(p,q).
\]

The cash position does not affect the dynamics of the price process or the inventory process, nor does it enter the order arrival intensities. Since changing the initial cash by a constant $c\in\R$ shifts the terminal cash position by the same constant, the value function satisfies
\[
    \Upsilon(t,p,q,x+c)=\Upsilon(t,p,q,x)+c
\]
for all $(t,p,q,x)\in[0,T]\times(0,1)\times\mathcal{Q}\times\R$ and $c\in\R$. Therefore, it suffices to consider the case of zero initial cash. We define the reduced value function $V\colon[0,T]\times(0,1)\times\mathcal{Q}\to\R$ by
\[
    V(t,p,q):=\Upsilon(t,p,q,0)-qp,
\]
so that the full value function can be recovered via
\begin{equation}\label{eq:value_reduction}
    \Upsilon(t,p,q,x)=x+qp+V(t,p,q).
\end{equation}
Substituting \eqref{eq:value_reduction} in the full HJB equation
\eqref{eq:HJB_pred_full} yields the reduced equation
\begin{align}
\notag
0
&= -\partial_t V(t,p,q)
  - \frac12 \varsigma(t,p)^2\,\partial^2_{pp}V(t,p,q)
  + \gamma q^2 \varsigma(t,p)^2 \\
\notag&\quad
  - \1_{\{q<Q\}} H^b\l(t,p;\frac{V(t,p,q) - V(t,p,q+\Delta)}{\Delta}\r) \\
&\quad
  - \1_{\{q>-Q\}} H^a\l(t,p;\frac{V(t,p,q) - V(t,p,q-\Delta)}{\Delta}\r) \tag{HJB}\label{eq:HJB_PDE_reduced_pred}
\end{align}
on $[0,T)\times(0,1)\times\mathcal{Q}$ with terminal condition
\begin{equation}\tag{TC}\label{eq:terminal_pred_reduced}
    V(T,p,q) = \Phi(p,q).
\end{equation}
Here, for $\circ\in\{b,a\}$, the function $H^\circ\colon[0,T]\times(0,1)\times\R\to\R$ is defined by
\begin{align*}
    H^\circ(t,p;z)
    &:= \Delta \sup_{\pi\in[0,1]}\Lambda^\circ\l(t,p,\pi\r)\l(G^\circ(p,\pi) - z\r)
\end{align*}
with
\[
    G^b(p,\pi) = p - \pi
    \qquad \text{and} \qquad
    G^a(p,\pi) = \pi - p.
\]

\begin{remark}
    Originally, the value function depends on the four variables $t$, $p$, $q$, and $x$.\ The representation above removes the dependence on the cash variable $x$, so that the dimensionality reduces from four to three.
\end{remark}

\section{Main Results}\label{sec:main_results}

We summarize the assumptions made so far.

\begin{assumption}[Global Assumptions]\label{ass.global}\
\begin{enumerate}
\item[(A1)] The transformation map $f\colon\R\to(0,1)$ is of class $C^2$, satisfies
\[
    \lim_{x\to-\infty} f(x)=0,
    \qquad
    \lim_{x\to\infty} f(x)=1,
\]
and is strictly increasing, i.e. $f'(x)>0$ for all $x\in\R$. Moreover,
\[
    g(x)=\frac{f''(x)}{f'(x)}
\]
is bounded and Lipschitz.

\item[(A2)] The volatility $\sigma\colon[0,T]\times\R\to(0,\infty)$ of the latent belief process is continuous, bounded, and Lipschitz continuous in the second argument, uniformly in time.

\item[(A3)] The functions
\[
    \Lambda^b,\Lambda^a:[0,T]\times(0,1)\times[0,1]\to[0,\infty),
\]
which determine the $\mathbb{F}$-predictable intensities
\[
    \lambda_t^b
    =
    \Lambda^b(t,p_t,\pi_t^b)\mathbf 1_{\{q_{t-}<Q\}}
    \qquad \text{and} \qquad
    \lambda_t^a
    =
    \Lambda^a(t,p_t,\pi_t^a)\mathbf 1_{\{q_{t-}>-Q\}},
\]
satisfy the following conditions:
\begin{enumerate}[(i)]
    \item $\Lambda^b$ and $\Lambda^a$ are continuous and uniformly bounded by some constant $\bar{\Lambda}>0$,
    \item $\Lambda^b$ and $\Lambda^a$ are twice continuously differentiable in $\pi$ on $(0,1)$,
    \item for every $(t,p,\pi)\in[0,T]\times(0,1)\times(0,1)$, we have
    \[
        \partial_\pi \Lambda^b(t,p,\pi)>0
        \qquad\text{and}\qquad
        \partial_\pi \Lambda^a(t,p,\pi)<0,
    \]
    \item $\Lambda^b$ and $\Lambda^a$ satisfy the curvature condition
    \[
        \sup_{\pi\in(0,1)}
        \frac{\Lambda^\circ(t,p,\pi)\,\partial_{\pi\pi}^2 \Lambda^\circ(t,p,\pi)}
        {\bigl(\partial_\pi \Lambda^\circ(t,p,\pi)\bigr)^2}
        < 2
    \]
    for $\circ\in\{b,a\}$ and all $(t,p)\in[0,T]\times(0,1)$.
\end{enumerate}
    \item[(A4)] The terminal penalty $\Phi\colon (0,1)\times \mathcal Q\to \R$ is bounded and continuous.
\end{enumerate}
\end{assumption}

\subsection{The Optimal Quoting Strategy}
We now turn to the optimization problems in the definitions of $H^b$ and $H^a$, since their maximizers determine the optimal bid and ask quotes. The following proposition establishes existence and uniqueness of these maximizers, characterizes them, and provides useful properties.

\begin{proposition}\label{lem:H_unique_maximizer_pred_model_general}
Assume that Assumption \ref{ass.global} is satisfied.\ Let $(t,p)\in[0,T]\times(0,1)$, $\circ\in\{b,a\}$, and define
\[
\Psi^\circ_{t,p}(\pi;z)
:=\Lambda^\circ(t,p,\pi)\bigl(G^\circ(p,\pi)-z\bigr)\qquad \text{for }z\in\R\text{ and }\pi\in[0,1].
\]
Moreover, for $\pi\in(0,1)$, let
\[
    u^b_{t,p}(\pi)
    :=G^b(p,\pi)-\frac{\Lambda^b(t,p,\pi)}{\partial_\pi \Lambda^b(t,p,\pi)},
    \qquad
    u^a_{t,p}(\pi)
    :=G^a(p,\pi)+\frac{\Lambda^a(t,p,\pi)}{\partial_\pi \Lambda^a(t,p,\pi)}.
\]
Then the following statements hold:
\begin{enumerate}[(i)]
\item For every $z\in\R$, there exists a unique maximizer $\pi^{\circ,*}_{t,p}(z)\in[0,1]$ of $\Psi^\circ_{t,p}(\,\cdot\,;z)$ given by
\[
\pi^{b,*}_{t,p}(z)=
\begin{cases}
0, & z\ge u^b_{t,p}(0+),\\[2mm]
\l(u^b_{t,p}\r)^{-1}(z),
& z\in u^b_{t,p}\big((0,1)\big),\\[2mm]
1, & z\le u^b_{t,p}(1-),
\end{cases}
\]
and
\[
\pi^{a,*}_{t,p}(z)=
\begin{cases}
0, & z\le u^a_{t,p}(0+),\\[2mm]
\l(u^a_{t,p}\r)^{-1}(z),
& z\in u^a_{t,p}\big((0,1)\big),\\[2mm]
1, & z\ge u^a_{t,p}(1-),
\end{cases}
\]
where
\[
    u^\circ_{t,p}(0+):=\lim_{\pi\downarrow 0}u^\circ_{t,p}(\pi)\in\overline{\R}
    \qquad\text{and}\qquad
    u^\circ_{t,p}(1-):=\lim_{\pi\uparrow 1}u^\circ_{t,p}(\pi)\in\overline{\R}
\]
with $\overline{\R}:=\R\cup\{-\infty\}$.

\item The map
$z\mapsto\pi^{\circ,*}_{t,p}(z)$ is of class $C^1$ on $J^\circ_{t,p}:=u^\circ_{t,p}\big((0,1)\big)$.\
Moreover, $\pi^{b,*}_{t,p}$ is strictly decreasing on $J^b_{t,p}$, whereas $\pi^{a,*}_{t,p}$ is strictly increasing on $J^a_{t,p}$.

\item The function $z\mapsto H^\circ(t,p;z)$ is decreasing on $\R$ and of class $C^2$ on $J^\circ_{t,p}$.

\item For every $z\in J^\circ_{t,p}$, we have
\[
\pi^{\circ,*}_{t,p}(z)
=\big(\Lambda^\circ(t,p,\,\cdot\,)\big)^{-1}
\bigg(-\frac{\partial_z H^\circ(t,p;z)}{\Delta}\bigg).
\]

\item The map
\[
    [0,T]\times(0,1)\times\mathbb{R}\to[0,1],
    \quad
    (t,p,z)\mapsto \pi_{t,p}^{\circ,*}(z)
\]
is continuous.
\end{enumerate}
\end{proposition}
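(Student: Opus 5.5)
The plan is to do everything for the bid side $\circ=b$; the ask side follows by the symmetric computation, or equivalently from the reflection $\pi\mapsto 1-\pi$, $p\mapsto 1-p$, which turns a decreasing intensity into an increasing one. Fix $(t,p)$ and write $\Lambda=\Lambda^b(t,p,\cdot)$ and $u=u^b_{t,p}$. On $(0,1)$ we compute
\[
\partial_\pi\Psi^b_{t,p}(\pi;z)=\Lambda'(\pi)(p-\pi-z)-\Lambda(\pi)=\Lambda'(\pi)\bigl(u(\pi)-z\bigr).
\]
Since $\Lambda'>0$, the sign of the derivative is the sign of $u(\pi)-z$.

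The key step is strict monotonicity of $u$. Differentiating gives
\[
u'(\pi)=-1-\frac{\Lambda'^2-\Lambda\Lambda''}{\Lambda'^2}=-2+\frac{\Lambda\Lambda''}{\Lambda'^2}.
\]
By the curvature condition (A3)(iv), the supremum of $\Lambda\Lambda''/\Lambda'^2$ is strictly below $2$. Hence $u'<0$ on $(0,1)$, so $u$ is strictly decreasing, and in fact $u'\le -c_{t,p}<0$ uniformly in $\pi$. This makes the limits $u(0+)\in\R\cup\{+\infty\}$ and $u(1-)\in\R\cup\{-\infty\}$ exist. Note that $u\le p-\pi$, so $u(1-)$ cannot be $+\infty$. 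Also, $u(0+)$ can be $+\infty$ only if we use the extended convention; since $\Lambda\ge0$ and $\Lambda'>0$, $u(0+)\le p$ is finite, and the statement's $\overline{\R}$ covers $u(1-)$ possibly being $-\infty$.

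The case distinction in (i) then follows from the sign analysis. If $z\ge u(0+)$, then $\partial_\pi\Psi<0$ on $(0,1)$, so $\Psi$ is strictly decreasing and, by continuity of $\Psi$ on $[0,1]$ from (A3)(i), the unique maximizer is $0$. If $z\le u(1-)$, then $\Psi$ is strictly increasing and the maximizer is $1$. Otherwise there is a unique root $\pi^*=u^{-1}(z)$; $\Psi$ increases before it and decreases after it, so $\pi^*$ is the unique maximizer.

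For (ii), $u$ is $C^1$ with $u'<0$ on $(0,1)$, because $\Lambda$ is $C^2$ in $\pi$. The inverse function theorem then gives that $u^{-1}$ is $C^1$ and strictly decreasing on the open interval $J^b=u((0,1))$.

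For (iii), $H^b(t,p;\cdot)$ is $\Delta$ times a supremum of affine functions of $z$ with slopes $-\Lambda(\pi)\le 0$, so it is nonincreasing (and convex). On $J^b$, the envelope theorem gives $\partial_zH^b=-\Delta\Lambda(\pi^*(z))$. To justify this, write $H^b=\Delta\Psi(\pi^*(z);z)$ and use the chain rule together with $\partial_\pi\Psi(\pi^*;z)=0$. The derivative $-\Delta\Lambda(\pi^*(z))$ is a composition of $C^1$ maps, so $H^b$ is $C^2$ on $J^b$.

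Item (iv) is obtained by inverting this derivative identity. This is allowed because $\Lambda$ is strictly increasing, hence injective.

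The main obstacle is (v), joint continuity in $(t,p,z)$, because the interval $J^\circ_{t,p}$ and the formula both move with $(t,p)$. I would avoid the explicit formula and use a Berge maximum-theorem argument instead. The map $(t,p,z,\pi)\mapsto\Psi^\circ_{t,p}(\pi;z)$ is jointly continuous on $[0,T]\times(0,1)\times\R\times[0,1]$ by (A3)(i), the constraint set $[0,1]$ is compact and constant, and the maximizer is unique by (i). Concretely: take $(t_n,p_n,z_n)\to(t,p,z)$. By compactness, any subsequence of $\pi^*_n$ has a further subsequence converging to some $\bar\pi$. For every $\pi$ we have $\Psi_n(\pi^*_n)\ge\Psi_n(\pi)$, and passing to the limit gives $\Psi(\bar\pi)\ge\Psi(\pi)$. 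Uniqueness then forces $\bar\pi=\pi^*_{t,p}(z)$, so the whole sequence converges, which proves continuity.
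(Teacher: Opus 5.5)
Your proposal is correct and follows essentially the same route as the paper. You use the factorization $\partial_\pi\Psi^\circ=(\Lambda^\circ)'(u^\circ-z)$, strict monotonicity of $u^\circ$ via the curvature condition, and the implicit/inverse function theorem for (ii). You also use the envelope identity $\partial_z H^\circ=-\Delta\Lambda^\circ(\pi^{\circ,*}(z))$ for (iii)--(iv) and the compactness-plus-uniqueness subsequence argument for (v). Your chain-rule justification of the envelope step and the reflection $\pi\mapsto 1-\pi$, $p\mapsto 1-p$ for the ask side are small, valid refinements of what the paper does directly.
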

The proof is deferred to Appendix~\ref{sec:proof.unique.maximizer}.

\subsection{Verification Theorem}
We now verify that the pointwise maximizers characterized in Proposition~\ref{lem:H_unique_maximizer_pred_model_general} indeed provide optimal
controls for the market making problem. To this end, we assume that the reduced Hamilton--Jacobi--Bellman equation \eqref{eq:HJB_PDE_reduced_pred} with terminal condition \eqref{eq:terminal_pred_reduced} admits a classical solution.\ In Theorem~\ref{thm.ex.unique.classical} below, we provide sufficient conditions for the existence and uniqueness of a classical solution, which are natural in view of the regularity results in \cite[Chapter 9]{Krylov1996}.\ The following theorem identifies the induced candidate with the value function of the control problem and proves optimality of the associated quoting strategy.

\begin{theorem}[Verification Theorem]
\label{thm:verification}
Assume that Assumption \ref{ass.global} is satisfied and that  there exists a classical solution 
$$V\in C^{1,2}\big([0,T)\times (0,1)\times \mathcal Q\big)\cap C_{\rm b}\big([0,T]\times (0,1)\times \mathcal Q\big)$$
to \eqref{eq:HJB_PDE_reduced_pred} with terminal condition \eqref{eq:terminal_pred_reduced}.\ Define
\[
    \Upsilon(t,p,q,x) := x+qp+V(t,p,q)
\]
for $(t,p,q,x)\in [0,T]\times(0,1)\times\mathcal Q\times\R$.\ Then $\Upsilon$ is the value function of the full control problem, namely
\[
\begin{aligned}
\Upsilon(t,p,q,x)
=
\sup_{\pi\in\mathcal A(t)}
\E\Bigg[
X_T^{t,p,q,x,\pi}
&+q_T^{t,p,q,\pi}p_T^{t,p}+\Phi\big(p_T^{t,p},q_T^{t,p,q,\pi}\big)\\
&\qquad-\gamma\int_t^T
\bigl(q_s^{t,p,q,\pi}\bigr)^2
\varsigma(s,p_s^{t,p})^2\,ds\Bigg],
\end{aligned}
\]
see \eqref{eq:def.value.function}.\ Moreover, an optimal control is given as follows.\ For $q\in\mathcal{Q}\setminus\{Q\}$, set
\[
        z_b(t,p,q)
        :=
        \frac{V(t,p,q)-V(t,p,q+\Delta)}{\Delta},
\]
and, for $q\in\mathcal{Q}\setminus\{-Q\}$, set
\[
        z_a(t,p,q)
        :=
        \frac{V(t,p,q)-V(t,p,q-\Delta)}{\Delta}.
\]
Define
\[
        \pi^{b,*}(t,p,q)
        \in
        \operatorname*{arg\,max}_{\pi\in[0,1]}
        \Lambda^b(t,p,\pi)
        \big(G^b(p,\pi)-z_b(t,p,q)\big),
        \quad \text{for } q\in\mathcal{Q}\setminus\{Q\},
\]
and
\[
        \pi^{a,*}(t,p,q)
        \in
        \operatorname*{arg\,max}_{\pi\in[0,1]}
        \Lambda^a(t,p,\pi)
        \big(G^a(p,\pi)-z_a(t,p,q)\big)
        \quad \text{for } q\in\mathcal{Q}\setminus\{-Q\}.
\]
At the boundary $q=Q$ and $q=-Q$, the bid and ask quotes may be chosen arbitrarily in $[0,1]$, respectively.\footnote{Economically, when $q=Q$, no bid quote is placed, and when $q=-Q$, no ask quote is placed.}\ Then the quoting strategy
\[
        \pi^*_s
        =
        \big(
        \pi^{b,*}(s,p_s,q_{s-}),
        \pi^{a,*}(s,p_s,q_{s-})
        \big),
        \qquad \text{for } s\in[t,T],
\]
is admissible and optimal.
\end{theorem}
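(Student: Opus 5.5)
The proof is a standard verification argument based on It\^o's formula for jump--diffusions. It is applied to the process $s\mapsto \Upsilon(s,p_s,q_s,X_s)=X_s+q_sp_s+V(s,p_s,q_s)$ along an arbitrary admissible control $\pi\in\mathcal A(t)$.

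\textbf{Step 1: the dynamics under an arbitrary control.} Fix $(t,p,q,x)$ and $\pi\in\mathcal A(t)$. Between jumps only $p$ moves, so the continuous part of $d\Upsilon$ is
\[
\bigl(\partial_tV+\tfrac12\varsigma^2\partial^2_{pp}V\bigr)\,ds+(q_{s-}+\partial_pV)\,\varsigma\,dW_s .
\]
Here we use that $p$ is a martingale, so the term $q_s\,dp_s$ has no drift. At a bid execution the increment is
\[
-\Delta\pi^b_s+\Delta p_s+V(s,p_s,q_{s-}+\Delta)-V(s,p_s,q_{s-})=\Delta\bigl(G^b(p_s,\pi^b_s)-z_b(s,p_s,q_{s-})\bigr),
\]
and analogously for the ask side with $G^a$ and $z_a$. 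We compensate the Poisson integrals using the intensities $\Lambda^\circ(s,p_s,\pi^\circ_s)\1_{\{\cdot\}}$. The \eqref{eq:HJB_PDE_reduced_pred} equation together with the definition of $H^\circ$ as a supremum then gives
\[
\partial_tV+\tfrac12\varsigma^2\partial^2_{pp}V-\gamma q^2\varsigma^2+\1_{\{q<Q\}}\Delta\Psi^b(\pi^b;z_b)+\1_{\{q>-Q\}}\Delta\Psi^a(\pi^a;z_a)\le 0 .
\]
Equality holds when $\pi^\circ=\pi^{\circ,*}$. Integrating from $t$ to $T$ and taking expectations shows that
\[
\E\Bigl[\Upsilon(T,\dots)-\gamma\int_t^T q_s^2\varsigma^2\,ds\Bigr]\le\Upsilon(t,p,q,x),
\]
with equality for $\pi^*$. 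By \eqref{eq:terminal_pred_reduced}, $\Upsilon(T,p,q,x)=x+qp+\Phi(p,q)$, which is exactly the reward. This proves both the value-function identity and the optimality of $\pi^*$.

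\textbf{Step 2: justifying the expectations (the main obstacle).} Two technical issues arise.
\begin{enumerate}[(a)]
\item $V$ is only $C^{1,2}$ on $[0,T)$, not up to $T$, and $\partial_pV$ is not known to be bounded.
\item The state $p$ approaches the open boundary of $(0,1)$ only asymptotically.
\end{enumerate}
To handle these, I would localize with stopping times
\[
\tau_n=\inf\{s\ge t: p_s\notin[1/n,1-1/n]\}\wedge(T-1/n).
\]
On $[t,\tau_n]$ the stochastic integral is a true martingale, since $\partial_pV$ is continuous on a compact set, $\varsigma$ is bounded, and $q$ takes finitely many values. The compensated jump integrals are martingales because the intensities are bounded by $\bar\Lambda$ and the jump sizes are bounded, using $V\in C_{\rm b}$ and $\pi\in[0,1]$. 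We then let $n\to\infty$. Since $p$ is the image under $f$ of a nonexplosive diffusion, it never reaches $0$ or $1$, so $\tau_n\to T$ almost surely. The random variables $X_{\tau_n}$, $q_{\tau_n}p_{\tau_n}$ and $V(\tau_n,\cdot)$ converge by continuity of $V$ on $[0,T]$ and its boundedness. Here $X$ is dominated by $\Delta(N^a_T+N^b_T)$, which is integrable because the intensities are bounded. The running penalty converges by monotone convergence. Dominated convergence therefore yields the inequality and equality in the limit.

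\textbf{Step 3: admissibility of $\pi^*$.} By Proposition~\ref{lem:H_unique_maximizer_pred_model_general}(i) the maximizer is unique, so
\[
\pi^{\circ,*}(t,p,q)=\pi^{\circ,*}_{t,p}\bigl(z_\circ(t,p,q)\bigr).
\]
By Proposition~\ref{lem:H_unique_maximizer_pred_model_general}(v) and the continuity of $V$, this map is a continuous, hence Borel, function of $(t,p)$ for each fixed $q$. Evaluating it at $(s,p_s,q_{s-})$ gives a predictable $[0,1]^2$-valued process, because $p$ is continuous and $q_{s-}$ is left-continuous. Hence $\pi^*\in\mathcal A(t)$.

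There is no circularity in this definition. The counting processes are constructed by thinning the Poisson random measures $M^b,M^a$ with an intensity that depends only on $(s,p_s,q_{s-})$. Because $q$ takes values in the finite set $\mathcal Q$, the jumps can be built recursively, and the state process is well defined up to $T$.

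Finally, when $q=Q$ or $q=-Q$ the indicator in the intensity vanishes. The choice of quote on that side therefore does not enter the dynamics, which is consistent with the arbitrary choice allowed in the statement.
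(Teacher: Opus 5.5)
Your proposal is correct and follows essentially the same route as the paper's proof: It\^o's formula for $\Upsilon$ with the jump increments rewritten as $\Delta(G^\circ-z_\circ)$, the pointwise bound by $H^\circ$ combined with \eqref{eq:HJB_PDE_reduced_pred}, localization in space and time (you stop $p$ on $[1/n,1-1/n]$ where the paper stops the latent process $L$, which is equivalent), dominated convergence as $n\to\infty$, and predictability of $\pi^*$ via Proposition~\ref{lem:H_unique_maximizer_pred_model_general}(v). Your extra remark on well-posedness of the feedback-controlled state is a useful detail the paper leaves implicit, though the recursive construction rests on the bound $\Lambda^\circ\le\bar\Lambda$, so that $M^b$ and $M^a$ have only finitely many atoms in $[0,T]\times[0,\bar\Lambda]$, rather than on the finiteness of $\mathcal Q$.
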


The proof of Theorem~\ref{thm:verification} is contained in Appendix~\ref{app:proof.verification}.

Theorem~\ref{thm:verification} shows that, once a classical solution of the reduced Hamilton--Jacobi--Bellman equation is available, the optimal quoting strategy is obtained from the pointwise maximizers characterized in Proposition~\ref{lem:H_unique_maximizer_pred_model_general}.\ More precisely, by Theorem~\ref{thm:verification} and Proposition~\ref{lem:H_unique_maximizer_pred_model_general}, for every $(t,p,q)\in [0,T]\times(0,1)\times\mathcal{Q}$, the optimal bid and ask quotes are, in the setup of Proposition~\ref{lem:H_unique_maximizer_pred_model_general},
\[
    \pi_{t,p}^{b,*}\bigl(z_b(t,p,q)\bigr)
    \qquad\text{and}\qquad
    \pi_{t,p}^{a,*}\bigl(z_a(t,p,q)\bigr),
\]
where
\[
    z_b(t,p,q)=\frac{V(t,p,q)-V(t,p,q+\Delta)}{\Delta}
\]
and
\[
    z_a(t,p,q)=\frac{V(t,p,q)-V(t,p,q-\Delta)}{\Delta},
\]
whenever the neighboring inventory levels are admissible. At the inventory boundaries $q=Q$ and $q=-Q$, the quote on the constrained side is immaterial, since the corresponding order intensity is zero. In practice, this corresponds to withdrawing the quote on the constrained side rather than assigning it an arbitrary value.

It remains to prove existence and uniqueness of a classical solution to the reduced Hamilton--Jacobi--Bellman equation with the corresponding terminal condition, which is the purpose of the next subsection.

\subsection{Existence and Uniqueness of a Classical Solution}\label{sec:existence_uniqueness_mild_solution}
We now provide sufficient conditions in order to establish existence and uniqueness of a solution to \eqref{eq:HJB_PDE_reduced_pred} with terminal condition \eqref{eq:terminal_pred_reduced}.\ We adopt the notation from \cite[Chapter 8 and Chapter 9]{Krylov1996}.\ 
On $[0,\infty)\times \R$, we consider the parabolic distance $$d(z_1,z_2):= |t_1-t_2|^{1/2}+|x_1-x_2|$$
for $z_1=(t_1,x_1), z_2=(t_2,x_2)\in [0,\infty)\times \R$.\

Let $\alpha\in (0,1)$.\ For a nonempty set $D\subseteq \R$ or $D\subseteq [0,\infty)\times \R$, we identify functions on $D\times \mathcal Q$ with $\R^{\mathcal Q}$-valued functions defined on $D$, and define the H\"older space $$C^{\alpha}(D\times \mathcal Q):=C^{\alpha}(D;\R^{\mathcal Q})$$ or the parabolic H\"older space $$C^{\alpha/2,\alpha}(D\times \mathcal Q):=C^{\alpha/2,\alpha}(D;\R^{\mathcal Q})$$ as the set of all functions $u\in C_{\rm b}(D\times \mathcal Q)$ with
\[
 |u|_{\alpha}:=\|u\|_\infty+[u]_{\alpha}<\infty\quad \text{or}\quad |u|_{\alpha/2,\alpha}:=\|u\|_\infty+[u]_{\alpha/2,\alpha}<\infty,
\]
where
\begin{align*}
[u]_{\alpha}&:=\sup_{q\in \mathcal Q}\sup_{\substack{x_1,x_2\in D\\ x_1\neq x_2}}\frac{|u(x_1)-u(x_2)|}{|x_1-x_2|^\alpha}\quad \text{or}\\
[u]_{\alpha/2,\alpha}&:=\sup_{q\in \mathcal Q}\sup_{\substack{z_1,z_2\in D\\ z_1\neq z_2}}\frac{|u(z_1)-u(z_2)|}{d(z_1,z_2)^\alpha},
\end{align*}
respectively.\ The spaces $C^{2+\alpha}(D\times Q)$ and $C^{1+\alpha/2,2+\alpha}(D\times Q)$ are defined accordingly if $D$ is a domain.\ Moreover, $C^{\alpha}(D)$ and $C^{\alpha/2,\alpha}(D)$ denote the classical H\"older and parabolic H\"older spaces, see \cite[Definition 3.1.2, p.\ 34, and Section 8.5, p.\ 117]{Krylov1996} for the details, respectively.

Since the volatility coefficient $\varsigma(t,p)$, defined in \eqref{eq:def.varsigma}, may vanish as $p$ approaches $0$ or $1$, the corresponding second-order differential operator in $p$ needs not be uniformly elliptic.\ We therefore follow a classical approach and analyze a transformed version of the differential equation on $\R$, corresponding to a semilinear equation for the latent belief process $(L_t)_{t\in [0,T]}$.\ We work under the following stronger assumption.

\begin{assumption}[Existence and Uniqueness of a Classical Solution]\label{ass.exunique}
In addition to Assumption \ref{ass.global}, there exists $\alpha \in (0,1)$ such that
    \begin{enumerate}
\item[(i)] the coefficient $a:=\frac12\sigma^2$ satisfies $a\in C^{\alpha/2,\alpha}([0,T]\times \R)$ and $$\inf_{(t,x)\in [0,T]\times \R}a(t,x)>0,$$
\item[(ii)] the transformed intensities
\[
\Lambda_f^\circ\big(t,x,\pi\big)
:=
\Lambda^\circ(t,f(x),\pi),
\quad \text{for } (t,x)\in [0,T]\times \R, \; \pi\in [0,1],\; \circ\in\{b,a\},
\]
are uniformly parabolic $\alpha$-Hölder continuous in $(t,x)$, that is,
\[
\big[\Lambda_f^\circ\big]_{\alpha/2,\alpha}:=\sup_{\pi \in [0,1]} \big[\Lambda_f^\circ(\,\cdot\,, \pi)\big]_{\alpha/2,\alpha}<\infty \qquad \text{for }\circ\in \{b,a\},
\]
\item[(iii)] the map $$\varphi\colon \R\times \mathcal Q\to \R,\quad (x,q)\mapsto \Phi\big(f(x),q\big)$$ satisfies $\varphi\in C^{2+\alpha}(\R\times \mathcal Q)$. 
    \end{enumerate}
\end{assumption}

We say that a function $V$ is a classical solution to \eqref{eq:HJB_PDE_reduced_pred} with terminal condition \eqref{eq:terminal_pred_reduced} if
$$V\in C^{1,2}\big([0,T)\times (0,1)\times \mathcal Q\big)\cap C_{\rm b}\big([0,T]\times (0,1)\times \mathcal Q\big),$$
and $V$ satisfies \eqref{eq:HJB_PDE_reduced_pred} and \eqref{eq:terminal_pred_reduced} in a classical sense.

\begin{theorem}\label{thm.ex.unique.classical}
Assume that Assumption \ref{ass.exunique} is satisfied.\ Then, the nonlinear partial differential equation \eqref{eq:HJB_PDE_reduced_pred} together with terminal condition \eqref{eq:terminal_pred_reduced} admits a unique classical solution $V$.\ Moreover, $v(t,x,q):=V\big(t,f(x),q\big)$ for $t\in [0,T]$, $x\in \R$, and $q\in \mathcal Q$ satisfies
$$v\in C^{1+\alpha/2,2+\alpha}\big([0,T)\times \R\times \mathcal Q\big)\cap C^{\alpha/2,\alpha}\big([0,T]\times \R\times \mathcal Q\big).$$
\end{theorem}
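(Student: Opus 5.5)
We pass to the latent variable and write the problem as a weakly coupled semilinear parabolic system on $\R$ with uniformly elliptic coefficients. For $v(t,x,q):=V(t,f(x),q)$ we have $\partial_x v = f'\partial_p V$ and $\partial_{xx}v = f'^2\partial_{pp}V + f''\partial_p V$. Hence $\tfrac12\varsigma^2\partial_{pp}V = a\,\partial_{xx}v - a g\,\partial_x v = a\partial_{xx}v+\mu\partial_x v$, and \eqref{eq:HJB_PDE_reduced_pred}--\eqref{eq:terminal_pred_reduced} is equivalent to
\begin{equation*}
-\partial_t v - a\partial_{xx}v - \mu\partial_x v + 2\gamma q^2 f'(x)^2 a
- \1_{\{q<Q\}}H^b_f\Big(t,x;\tfrac{v(q)-v(q+\Delta)}{\Delta}\Big)
- \1_{\{q>-Q\}}H^a_f\Big(t,x;\tfrac{v(q)-v(q-\Delta)}{\Delta}\Big)=0,
\end{equation*}
with $v(T,\cdot,q)=\varphi(\cdot,q)$. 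Here $H^\circ_f(t,x;z):=H^\circ(t,f(x);z)$.

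The key structural facts come from the definition of $H^\circ$ as a supremum of affine functions in $z$ with slopes $-\Delta\Lambda^\circ\in[-\Delta\bar\Lambda,0]$. First, $z\mapsto H^\circ_f(t,x;z)$ is convex, nonincreasing, and globally Lipschitz with constant $\Delta\bar\Lambda$. Second, $|H^\circ_f(t,x;0)|\le\Delta\bar\Lambda$. Third, since $G^\circ$ is Lipschitz in $p$ and $f$ is Lipschitz, $H^\circ_f$ inherits parabolic $\alpha$-H\"older regularity in $(t,x)$ uniformly on bounded $z$-sets from Assumption~\ref{ass.exunique}(ii), with the seminorm growing at most linearly in $|z|$. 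So the nonlinearity $F(t,x,v)$, viewed as a map on $\R^{\mathcal Q}$, is globally Lipschitz in $v$ uniformly in $(t,x)$. The source $2\gamma q^2f'^2a$ is bounded and lies in $C^{\alpha/2,\alpha}$, since $f'$ is Lipschitz by Lemma~\ref{lem:f'_bounded}. The drift $\mu=-ag$ is also in $C^{\alpha/2,\alpha}$, being the product of $a$ with a bounded Lipschitz function.

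\textbf{Mild solution by a fixed point.} Let $P_{t,s}$ denote the transition operator of the latent diffusion $L$. We look for bounded continuous $v$ with
\[
v(t,x,q)=\E\Big[\varphi(L^{t,x}_T,q)+\int_t^T\Big(F\big(s,L^{t,x}_s,v(s,L^{t,x}_s,\cdot)\big)_q - 2\gamma q^2 f'(L^{t,x}_s)^2a(s,L^{t,x}_s)\Big)ds\Big].
\]
Because $F$ is globally Lipschitz in $v$, the map on the right is a contraction on $C_b([0,T]\times\R\times\mathcal Q)$ under the weighted norm $\sup_t e^{\beta(T-t)}\|v(t)\|_\infty$ for $\beta$ large. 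Banach's fixed-point theorem then gives a unique mild solution. It remains to upgrade this solution to H\"older regularity in $(t,x)$.

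\textbf{H\"older estimate and Schauder theory.} This is the step I expect to be the main obstacle. I would proceed in three moves.
\begin{enumerate}[(a)]
\item Show that $P_{t,s}$ maps bounded functions into $C^{\alpha}$ with norm bounded by $C(s-t)^{-\alpha/2}$, and $C^{2+\alpha}$ functions into $C^{\alpha/2,\alpha}$. These are Krylov's estimates for uniformly parabolic equations with H\"older coefficients, available since $a$ is bounded below and in $C^{\alpha/2,\alpha}$; alternatively one can use coupling or Lipschitz flow bounds for the SDE.
\item Conclude that the fixed-point map preserves a ball in $C^{\alpha/2,\alpha}$. The $(s-t)^{-\alpha/2}$ singularity is integrable, so a Gronwall-type argument yields an a priori bound $|v|_{\alpha/2,\alpha}<\infty$.
\item Treat $v(q)$ as the solution of the linear equation $-\partial_tu-a\partial_{xx}u-\mu\partial_xu=h$, with $h(t,x):=F(t,x,v)_q-2\gamma q^2f'^2a$. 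Since $v\in C^{\alpha/2,\alpha}$ and $F$ is Lipschitz in $v$ and H\"older in $(t,x)$, we get $h\in C^{\alpha/2,\alpha}$; since also $\varphi\in C^{2+\alpha}$, the Schauder theory of \cite[Chapter 9]{Krylov1996} provides a unique solution $u\in C^{1+\alpha/2,2+\alpha}$.
\end{enumerate}
By uniqueness of the bounded mild representation of this linear problem, $u=v(q)$. Thus $v$ is classical and has the stated regularity, and $V(t,p,q):=v(t,f^{-1}(p),q)$ is a bounded classical solution of \eqref{eq:HJB_PDE_reduced_pred}.

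\textbf{Uniqueness.} Let $V$ be any bounded classical solution. Then $v=V(\cdot,f(\cdot),\cdot)$ is $C^{1,2}$ on $[0,T)$, bounded, and continuous up to $T$. Apply It\^o's formula to $v(s,L_s^{t,x},q)$ on $[t,T-\varepsilon]$ and stop at the exit times of large balls; boundedness of $\sigma$ and $\partial_xv$-integrability after localization handle the martingale term. Letting the localization vanish and $\varepsilon\downarrow0$ via bounded convergence shows that $v$ satisfies the mild equation. Uniqueness of the mild solution then gives uniqueness of $V$.
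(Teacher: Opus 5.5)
Your proposal is correct and follows the paper's architecture. The shared steps are: the latent-variable transformation, a Feynman--Kac mild formulation solved by Banach's fixed point theorem (contraction constant $4\bar\Lambda/\beta$), H\"older regularity of the fixed point, Krylov's Schauder theorem for the linear problem with frozen source $h=\mathcal Hv$, identification of that classical solution with $v$ via It\^o's formula, and uniqueness from ``classical $\Rightarrow$ mild''. One slip: the weight must be $e^{-\beta(T-t)}$, which downweights early times because errors propagate backward from $T$. With your $e^{\beta(T-t)}$, the same estimate only yields the factor $4\bar\Lambda(e^{\beta T}-1)/\beta\ge 4\bar\Lambda T$, so you do not get a contraction in general.

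Where you genuinely differ is the H\"older step. You rely on the smoothing bound $[P_{t,s}h]_\alpha\le C(s-t)^{-\alpha/2}\|h\|_\infty$. This makes $\Gamma u$ H\"older for every bounded $u$, so the fixed point is H\"older at once because $\|\mathcal Hv\|_\infty<\infty$, with no tracking of iterates.

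The paper instead uses only the stability estimate $\E|L^{t_1,x_1}_s-L^{t_2,x_2}_s|\le C_L(|t_1-t_2|^{1/2}+|x_1-x_2|)$, which preserves but does not create H\"older regularity. It combines this with Lemma~\ref{lem:operator.calH}, in which the $u$-dependence of the H\"older bound on $\mathcal Hu$ enters only through the factor $4\bar\Lambda$. In the weighted seminorm $[\cdot]_{\alpha,\beta}$ with $4\bar\Lambda C_L^\alpha/\beta=\tfrac12$, this gives a uniform H\"older bound along the Picard iterates $u_n=\Gamma u_{n-1}$, $u_0=0$, which passes to the limit.

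Consequently, your parenthetical ``alternatively, Lipschitz flow bounds'' cannot give (a) as stated (bounded $\to C^\alpha$). It works only if (b) is reorganized in the paper's way, because a Gronwall argument applied directly to $v$ presupposes $[v]_\alpha<\infty$.

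Your route buys a cleaner regularity step, but it rests on an external estimate: a heat-kernel (parametrix) or one-dimensional coupling bound that uses $\inf a>0$. That bound is not among the Schauder estimates for H\"older data in \cite[Chapter 9]{Krylov1996}, so it needs its own reference. The paper's route is self-contained and uses uniform ellipticity only in the final Schauder step. A small point in your favour: you check that the drift $\mu=-ag$ lies in $C^{\alpha/2,\alpha}$, which Krylov's theorem requires and the paper leaves implicit.
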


A key issue is the nonlinear coupling of the equation across neighboring inventory levels $q\in\mathcal{Q}$.\ To address this, we adopt a fixed-point approach based on a representation as a mild solution.\ More precisely, we consider a frozen version of the equation, in which the occurrences of $V$ in $H^b$ and $H^a$ are replaced by a prescribed candidate function.\ This yields a linear equation which is used to define an operator that maps each candidate function to a mild solution of the corresponding frozen problem.\ We prove that this operator admits a unique fixed point, which is then shown to be the unique classical solution to \eqref{eq:HJB_PDE_reduced_pred} with terminal condition \eqref{eq:terminal_pred_reduced}.

In the sequel, we briefly outline the main steps of the proof of Theorem \ref{thm.ex.unique.classical}.\ The details of the proof are relegated to Appendix \ref{app:deferred_existence_and_uniqueness_of_a_mild_solution}.

We start by introducing the notion of a mild solution to a transformed version of the reduced Hamilton--Jacobi--Bellman equation \eqref{eq:HJB_PDE_reduced_pred} with terminal condition \eqref{eq:terminal_pred_reduced}.

For any function \(v\colon [0,T]\times \R\times Q\to\mathbb R\), define
\[
\begin{aligned}
\mathcal Hv(t,x,q)
&:=
\gamma q^2\varsigma\big(t,f(x)\big)^2 \\
&\quad
-\mathbf 1_{\{q<Q\}}
H^b\left(
t,f(x);
\frac{v(t,x,q)-v(t,x,q+\Delta)}{\Delta}
\right) \\
&\quad
-\mathbf 1_{\{q>-Q\}}
H^a\left(
t,f(x);
\frac{v(t,x,q)-v(t,x,q-\Delta)}{\Delta}
\right).
\end{aligned}
\]

For $t\in[0,T]$ and every twice continuously differentiable function $u\colon\R\to\R$, let
\[
    (\mathcal L_tu)(x):=a(t,x)\bigg(\dv[2]{}{x}u(x)-g(x)\dv{}{x}u(x)\bigg) \qquad\text{for } x\in\R.
\]
With this notation, for arbitrary but fixed $\Phi\in C_{\rm b}\big((0,1)\times \mathcal Q\big)$, a function $V$ is a classical solution to the reduced Hamilton--Jacobi--Bellman equation \eqref{eq:HJB_PDE_reduced_pred} with terminal condition \eqref{eq:terminal_pred_reduced} if and only if $v(t,x,q):=V\big(t,f(x),q\big)$ is a classical solution to the abstract Cauchy problem 
\begin{equation}
   \begin{aligned} \partial_t v(t,x,q) + \bigl(\mathcal L_tv(t,\, \cdot\,, q)\bigr)(x) &= \mathcal Hv(t,x,q),\\
    v(T,x,q) &= \Phi\big(f(x),q\big), 
    \end{aligned}
    \tag{CP}\label{eq:cp}
\end{equation}
for $(t,x,q)\in[0,T)\times\R\times \mathcal Q$.\

A bounded continuous function
\[
    v\colon[0,T]\times\R\times\mathcal Q \to \R
\]
is called a mild solution to \eqref{eq:cp} if, for all $(t,x,q)\in[0,T]\times\R\times\mathcal Q$,
\begin{equation}\label{eq:def.mild.solution}
    v(t,x,q)
    =
    \E\big[\varphi(L_T^{t,x},q)\big]-
    \int_t^T
    \E\big[(\mathcal Hv)(s,L_s^{t,x},q)\big]\,ds\tag{Mild}
\end{equation}
with $\varphi(x,q):=\Phi\big(f(x),q\big)$.

In the proof of Theorem \ref{thm:existence_uniqueness_mild}, we employ a fixed point argument, based on Banach's fixed point theorem in the space $C_{\rm b}\big([0,T]\times \R\times \mathcal Q\big)$ to obtain the existence and uniqueness of a mild solution, i.e., a function $v\in C_{\rm b}\big([0,T]\times \R\times \mathcal Q\big)$ satisfying \eqref{eq:def.mild.solution}, which is then proved to be the unique classical solution to \eqref{eq:cp} in Section \ref{app:proof.classical}.

The fixed-point construction employed in the proof of Theorem \ref{thm:existence_uniqueness_mild} is also of computational interest, as it suggests a natural iterative procedure for approximating the unique classical solution.

\begin{remark} Let $v^{(0)}\in\mathcal U$ and define, recursively,
\[
    v^{(n)}:=\Gamma v^{(n-1)}
    \qquad \text{for }n\in\mathbb N
\]
with $\mathcal U$ and $\Gamma$ as in the proof of Theorem \ref{thm:existence_uniqueness_mild}.\ If we choose $\beta >0$ sufficiently large as in the proof of Theorem \ref{thm:existence_uniqueness_mild}, the operator $\Gamma$ is a contraction on $(\mathcal U,\|\cdot\|_\beta)$ with contraction constant
\[
    \kappa = \frac{4\overline\Lambda}{\beta} \in(0,1),
\]
see Appendix \ref{app:deferred_existence_and_uniqueness_of_a_mild_solution.mild} for the details.\ We then obtain
\[
    \|v^{(n)}-v\|_\beta
    \le
    \kappa \|v^{(n-1)}-v\|_\beta
    \le
    \kappa^n\|v^{(0)}-v\|_\beta
\]
for every $n\in\N$, where $v\in\mathcal U$ is the unique fixed point of $\Gamma$. In particular,
\[
    \|v^{(n)}-v\|_\infty \le e^{\beta T}\|v^{(n)}-v\|_\beta \to 0\qquad \text{as }n\to\infty.
\]
\end{remark}

We point out that this iterative scheme, based on the fixed-point argument, yields a natural numerical approximation method with explicit convergence guarantees due to Banach's fixed point theorem.\ In the next section, we present an alternative numerical approach that uses a finite difference scheme for the reduced Hamilton--Jacobi--Bellman equation.

\section{Numerical Analysis of the Optimal Quoting Strategy}\label{sec:numerical_analysis_optimal_quotes}
We now study the optimal quoting strategy numerically.\ To this end, we specify parameters within the framework of Section~\ref{sec:model} and approximate the reduced value function by solving \eqref{eq:HJB_PDE_reduced_pred} using a finite difference scheme. The resulting solution yields optimal bid and ask quotes as functions of time, inventory, and price.\ We then analyze their spread and skew and compare the performance against a myopic baseline strategy that ignores inventory risk.

\subsection{Specification of the Model}
We map latent beliefs to prices using the logistic function
\[
    f\colon\R\to(0,1),\quad x\mapsto\frac{1}{1+e^{-x}}.
\]
The price process therefore satisfies
\[
    dp_t = p_t(1-p_t)\tilde\sigma(t,p_t)\, dW_t,
\]
where $\tilde\sigma(t,p) := \sigma(t,f^{-1}(p))$. We specify
\[
    \sigma(t,x)
    =
    \sigma_0
    +
    \sigma_1 \left(\frac{t}{T}\right)^\eta
    +
    \frac{\sigma_2}{1+x^2}
    \qquad \text{for }(t,x)\in[0,T]\times\R,
\]
where $\sigma_0,\sigma_1,\sigma_2>0$ and $\eta\ge 1$.\ Hence, for $t\in[0,T]$ and $p\in(0,1)$, the function $\tilde\sigma$ is given by
\[
    \tilde\sigma(t,p)
    =
    \sigma_0
    +
    \sigma_1 \left(\frac{t}{T}\right)^\eta
    +
    \frac{\sigma_2}{1+\left(\ln\frac{p}{1-p}\right)^2}.
\]
The specification combines (i) a baseline volatility $\sigma_0$, (ii) a time-increasing component $\sigma_1 \left(\nicefrac{t}{T}\right)^\eta$ capturing faster information flow near settlement, and (iii) an uncertainty term $\nicefrac{\sigma_2}{(1+x^2)}$, which is largest at $p=\nicefrac{1}{2}$ and decays as beliefs become more extreme. Hence volatility increases as resolution approaches and is amplified when the event outcome remains most uncertain.

To the best of our knowledge, the empirical functional form of order intensities in prediction markets has not yet been studied systematically.\ The following specification should therefore be viewed as a tractable modeling choice that captures reasonable assumptions on the intensity. For $(t,p,\pi)\in[0,T]\times(0,1)\times[0,1]$, we define
\[
    \Lambda^\circ(t,p,\pi) := A(t,p)B^\circ(t,p,\pi) \qquad\text{for } \circ\in\{b,a\},
\]
where $A(t,p)$ captures overall market activity and $B^\circ(t,p,\pi)$ describes the dependence on the quoted price. We set
\[
    A(t,p)
    :=
    \l(
    A_0
    +
    A_1 \frac{e^{\xi t/T}-1}{e^\xi-1}
    \r)
    \sqrt{p(1-p)},
\]
with \(\xi,A_0,A_1>0\). Hence activity increases toward resolution and is maximal when uncertainty is highest, i.e., when $p=\nicefrac{1}{2}$.

The bid and ask shape functions are specified as
\[
    B^{b}(t,p,\pi)
    :=
    \l(\frac{2\pi}{\pi+p}\r)^{\nu}
    \exp\l(-k(t)(p-\pi)\r)
\]
and
\[
    B^{a}(t,p,\pi)
    :=
    \l(\frac{2(1-\pi)}{2-\pi-p}\r)^{\nu}
    \exp\l(-k(t)(\pi-p)\r),
\]
where $\nu>0$. The time-dependent liquidity parameter is given by
\[
    k(t) := k_0 + (k_1-k_0)\frac{e^{\nicefrac{\kappa t}{T}}-1}{e^\kappa-1} 
\]
with $\kappa,k_0,k_1>0$. This specification builds on exponential execution intensities in the bid and ask spreads, $\delta^b=p-\pi^b$ and $\delta^a=\pi^a-p$, while incorporating two prediction-market features. First, the multiplicative factors ensure that bid intensities vanish as quotes approach zero and ask intensities vanish as quotes approach one.\ They also reflect that a fixed absolute spread corresponds to a larger relative belief deviation near the boundaries than near the center of the price interval.\ The parameter $\nu$ controls the strength of this effect.\ Second, $k(t)$ allows spread sensitivity to vary over time.\ If $k_1>k_0$, liquidity increases toward settlement, so execution intensities decay more rapidly with the spread and quotes must be placed closer to the current price to achieve a given fill rate.

We set $T=1$ and use the values in Table~\ref{tab:parameters} unless stated otherwise.

\begin{table}[htbp]
\centering
\small
\setlength{\tabcolsep}{5pt}
\renewcommand{\arraystretch}{0.95}
\begin{tabular}{@{}llp{0.55\linewidth}@{}}
\toprule
\textbf{Parameter} & \textbf{Value} & \textbf{Description} \\
\midrule
\multicolumn{3}{@{}l}{\textit{Volatility}} \\
$\sigma_0$ & $0.6$ & Baseline level \\
$\sigma_1$ & $0.3$ & Time-acceleration scale \\
$\sigma_2$ & $0.1$ & Uncertainty scale \\
$\eta$     & $3$   & Acceleration exponent \\[3pt]

\multicolumn{3}{@{}l}{\textit{Trading activity}} \\
$A_0$   & $100$ & Baseline level \\
$A_1$   & $150$ & Time-variation scale \\
$\xi$ & $2$   & Acceleration exponent \\[3pt]

\multicolumn{3}{@{}l}{\textit{Intensity shape}} \\
$\nu$  & $1$   & Boundary decay \\
$k_0$    & $35$  & Initial spread sensitivity \\
$k_1$    & $50$  & Terminal spread sensitivity \\
$\kappa$ & $1.5$ & Sensitivity acceleration \\[3pt]

\multicolumn{3}{@{}l}{\textit{Risk and inventory}} \\
$\gamma$   & $4\cdot10^{-3}$ & Running inventory penalty \\
$\gamma_T$ & $10^{-3}$       & Terminal inventory penalty \\
$Q$        & $100$           & Inventory limit \\
$\Delta$   & $10$            & Trade size \\
\bottomrule
\end{tabular}
\caption{Model parameters.}
\label{tab:parameters}
\end{table}

The parameters are chosen to illustrate qualitative effects rather than to calibrate the model to a specific contract.

\subsection{Numerical Scheme}
The reduced value function $V(t,p,q)$ is computed numerically by solving the reduced Hamilton--Jacobi--Bellman equation on a grid in $(t,p,q)$. Time and price are discretized uniformly on $[0,T]$ and $[p_{\min},p_{\max}]$, respectively, with $p_{\min}=10^{-3}$ and $p_{\max}=1-10^{-3}$. Inventory takes values on the finite grid $\mathcal Q$.

Let $\{t_n\}_{n=0}^{N_t}$ denote the time grid with $t_0=0$ and $t_{N_t}=T$, let $h := t_1-t_0$ denote its step size, and let $\{p_j\}_{j=0}^{N_p}$ denote the uniform $p$-grid. The equation is solved backward in time using the implicit Euler method with terminal condition \eqref{eq:specialPhi}, i.e.,
\[
    V(t_{N_t},p,q) = -\gamma_T q^2 p(1-p).
\]
 The second derivative $\partial_{pp}^2 V$ is approximated by a second-order finite difference scheme. At the truncated boundaries $p_{\min}$ and $p_{\max}$, homogeneous Neumann boundary conditions are imposed. This yields a sparse matrix $D_2$ such that
\[
    D_2 u \approx \partial_{pp}^2 u
\]
for functions $u$ defined on the $p$-grid. For fixed $n\in\{0,\ldots,N_t\}$ and $q\in\mathcal{Q}$, define the vectors $\varsigma_n$ and $V_n(q)$ by
\[
    \varsigma_{n,j} := p_j(1-p_j)\,\tilde\sigma(t_n,p_j), \qquad \text{for }j\in\{0,\ldots,N_p\},
\]
and
\[
    V_{n,j}(q) := V(t_n,p_j,q) \qquad \text{for }j\in\{0,\ldots,N_p\}.
\]
Moreover, the vector $\varsigma_n^2$ is understood pointwise, that is,
\[
    (\varsigma_n^2)_j = \varsigma_{n,j}^2
\]
for $j\in\{0,\ldots,N_p\}$.

Our discretization of \eqref{eq:HJB_PDE_reduced_pred} at time $t_n$ yields
\begin{align}\label{eq:fully_discrete_hjb}
    A_n V_n(q)
    =
    V_{n+1}(q)
    -
    h\Bigl(
    \gamma q^2 \varsigma_n^2
    -
    \mathcal H^b_n(q)
    -
    \mathcal H^a_n(q)
    \Bigr)
\end{align}
for $n\in\{0,\ldots,N_t-1\}$ and $q\in\mathcal{Q}$, where
\[
    A_n := I - \frac{h}{2}\,\operatorname{diag}(\varsigma_n^2)\,D_2,
\]
$I := I_{N_p+1}$ denotes the identity on the $p$-grid, and the vectors $\mathcal H^b_n(q)$ and $\mathcal H^a_n(q)$ are defined by
\[
    \mathcal{H}^b_{n,j}(q)
    :=
    \1_{\{q<Q\}}
    H^b\!\l(
    t_n,
    p_j,
    \frac{V_{n,j}(q)-V_{n,j}(q+\Delta)}{\Delta}
    \r)
\]
and
\[
    \mathcal{H}^a_{n,j}(q)
    :=
    \1_{\{q>-Q\}}\,
    H^a\!\l(
    t_n,
    p_j,
    \frac{V_{n,j}(q)-V_{n,j}(q-\Delta)}{\Delta}
    \r)
\]
for $j\in\{0,\ldots,N_p\}$.

Equation \eqref{eq:fully_discrete_hjb} defines a nonlinear system for the unknown family
\[
    V_n := \{ V_n(q) \}_{q \in \mathcal Q}.
\]
We solve this system by a fixed-point iteration, starting from the initial guess $V_n^{(0)}=V_{n+1}$. Given an iterate $V_n^{(k)}$, we compute
\[
\mathcal H^{b,(k)}_{n,j}(q)
=
\1_{\{q<Q\}}
H^b\l(
t_n,
p_j,
\frac{V_{n,j}^{(k)}(q)-V_{n,j}^{(k)}(q+\Delta)}{\Delta}
\r)
\]
and
\[
\mathcal H^{a,(k)}_{n,j}(q)
=
\1_{\{q>-Q\}}
H^a\!\l(
t_n,
p_j,
\frac{V_{n,j}^{(k)}(q)-V_{n,j}^{(k)}(q-\Delta)}{\Delta}
\r)
\]
for $q\in\mathcal{Q}$ and $j\in\{0,\ldots,N_p\}$ by maximization over a fine uniform grid of candidate quotes. The fixed-point update is obtained by solving
\[
    A_n V_n^{(k+1)}(q)
    =
    V_{n+1}(q)
    -
    h\l(
    \gamma q^2 \varsigma_n^2
    -
    \mathcal H^{b,(k)}_n(q)
    -
    \mathcal H^{a,(k)}_n(q)
    \r)
\]
for every $q\in\mathcal Q$. We solve this system using an LU decomposition of $A_n$. Since $A_n$ does not depend on the iteration index $k$, the decomposition is computed once per time step and reused throughout the fixed-point iteration. The iteration is terminated once
\[
    \big\|V_n^{(k+1)} - V_n^{(k)}\big\|_\infty
\]
falls below a predetermined tolerance or a maximum number of iterations is reached.

After computing $V$ on the grid, the optimal quotes for arbitrary $(t,p,q)$ are obtained by trilinear interpolation of $V$ in $(t,p,q)$, followed by the same maximization method used above.\ We point out that the trilinear interpolation is carried out to be able to handle $t$ and $p$ values, which are not on the numerical grid, and to extend the model from the discrete set of possible inventories $\mathcal{Q}$ to allow for arbitrary inventory levels within the prescribed boundaries.

\subsection{Analysis of the Optimal Quoting Strategy}\label{sec:analysis_optimal_strategy}
We now analyze the optimal bid and ask quotes obtained from the numerical solution of the HJB equation. The analysis proceeds in two steps.\ First, we examine the bid-ask spread and skew under the parameter setting reported in Table~\ref{tab:parameters}.\ Second, we vary the risk aversion parameters in order to study how risk preferences alter the optimal quotes.

\subsubsection{Spread and Skew}
For a state $(t,p,q)\in[0,T]\times(0,1)\times\mathcal{Q}$, define the bid-ask spread and skew by
\[
    \mathrm{Spread}(t,p,q)
    :=
    \pi^a(t,p,q)-\pi^b(t,p,q),
\]
and
\[
    \mathrm{Skew}(t,p,q)
    :=
    \frac{\pi^a(t,p,q)+\pi^b(t,p,q)}{2}-p,
\]
respectively. The bid-ask spread measures the compensation required for providing liquidity, while the skew measures the displacement of the quote midpoint from the current price.

Figure~\ref{fig:spread_vs_time_q0} shows the spread at zero inventory for different price levels as a function of time. Since the bid-ask spread is symmetric around $p=\nicefrac{1}{2}$, we restrict our attention to $p\le\nicefrac{1}{2}$.
\begin{figure}[htbp]
    \centering
    \includegraphics[width=0.75\linewidth]{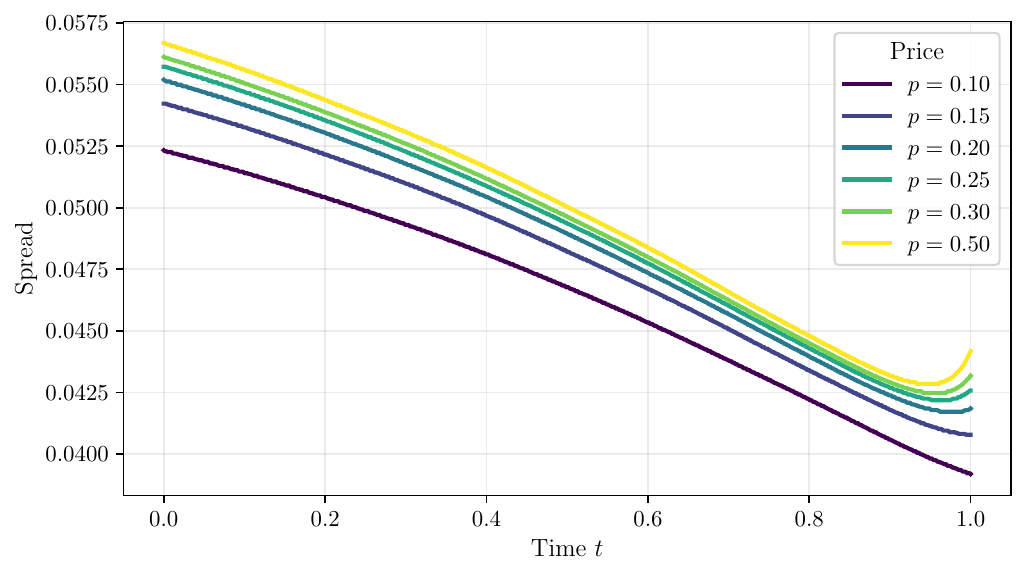}
    \caption{Bid-ask spread of the optimal quotes at flat inventory as a function of time $t$ for several price levels $p$.}
    \label{fig:spread_vs_time_q0}
\end{figure}
The spread generally decreases over time, as the liquidity parameter $k(t)$ rises, which makes execution intensities more spread-sensitive and encourages quoting closer to the current price. Near settlement, however, the spread widens for prices close to $p=\nicefrac{1}{2}$, where settlement risk is highest and there is little time left to unwind new positions. For the same reason, for any fixed time, spreads are larger for prices closer to $p=\nicefrac{1}{2}$.

We next consider the skew. Due to the multiplicative factors in the shape functions, symmetric quotes around the current price generally do not imply symmetric execution intensities.\ For $p<\nicefrac{1}{2}$, the bid intensity is lower than the ask intensity at equal spreads, so the bid is optimally placed closer to $p$.\ For $p>\nicefrac{1}{2}$, the reverse holds.\ Thus, even without inventory pressure, profit maximization induces a positive skew for $p<\nicefrac{1}{2}$ and a negative skew for $p>\nicefrac{1}{2}$.\ This effect becomes more pronounced as the price approaches zero or one.

Figure~\ref{fig:skew_multiple_q} shows the skew as a function of price for different inventory levels. At zero inventory, the skew is positive for prices below $p=\nicefrac{1}{2}$ and negative for prices above it.\ Inventory shifts the skew in the expected direction as short positions move quotes upward, whereas long positions move them downward. Near the boundaries, the curves converge because risk vanishes as $p(1-p)\to0$.

\begin{figure}[htbp]
    \centering
    \includegraphics[width=0.75\linewidth]{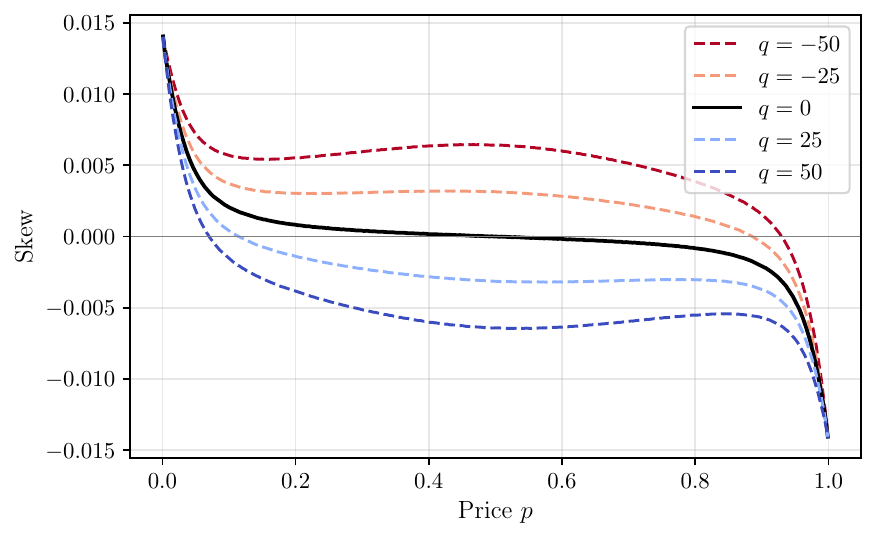}
    \caption{Skew of the optimal quotes at time $t=0$ as a function of price $p$ for inventory levels $q\in\{-50,-25,0,25,50\}$.}
    \label{fig:skew_multiple_q}
\end{figure}

To isolate inventory control, Figure~\ref{fig:heatmap_skew_qt_p0p50} fixes $p=\nicefrac{1}{2}$. The skew is decreasing in inventory and vanishes at $q=0$. Short positions induce positive skew to encourage buy executions and discourage further selling. Conversely, long positions induce negative skew to encourage sell executions and discourage further buying. Over time, the skew reflects two opposing effects. As the remaining horizon shortens, the running inventory penalty becomes less important, which pushes the skew toward zero. Closer to settlement, however, there is less time to unwind positions before the terminal penalty is imposed, which leads to stronger quote adjustments.

\begin{figure}[htbp]
    \centering
    \includegraphics[width=0.75\linewidth]{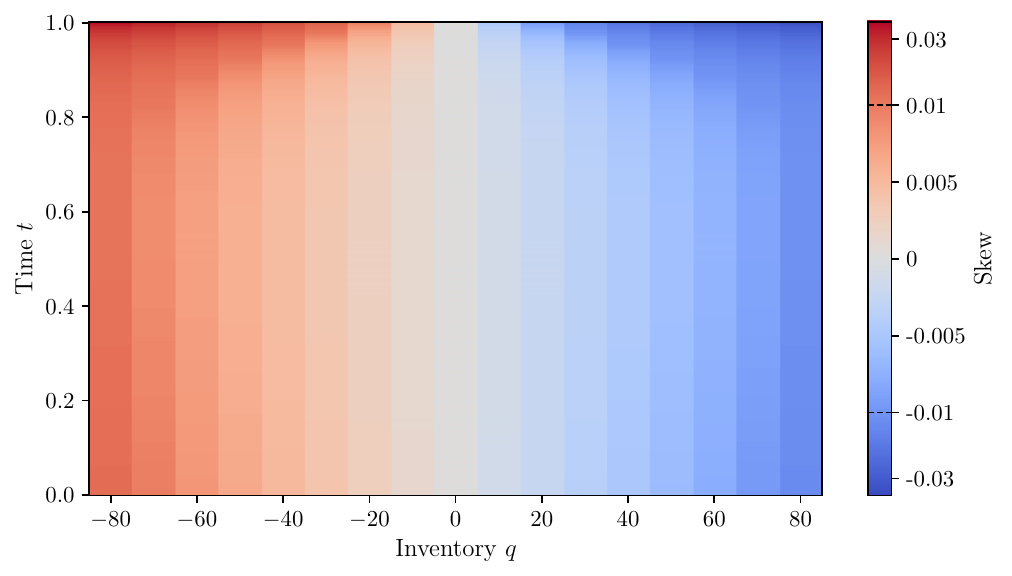}
    \caption{Skew of the optimal quotes at $p=\nicefrac{1}{2}$ as a function of time $t$ and inventory $q$.\ The color scale uses a symmetric logarithmic normalization with a linear region around zero up to a threshold of $0.01$.\ The dashed lines in the colorbar mark the transition between the linear and logarithmic regions.}
    \label{fig:heatmap_skew_qt_p0p50}
\end{figure}

\subsubsection{Risk Aversion Sensitivity}
We finally study how the skew changes with running and terminal risk aversion. We consider
\[
    \gamma_{\mathrm{low}} := 10^{-3}, \quad \gamma_{\mathrm{high}} := 10^{-2}, \quad \gamma_{T,\mathrm{low}} := 2\cdot10^{-4}, \quad \gamma_{T,\mathrm{high}} := 2\cdot10^{-3}.
\]
Figure~\ref{fig:skew_2x2_qt_p0p50} shows the skew at $p=\nicefrac{1}{2}$ for the four combinations of $(\gamma,\gamma_T)$.

\begin{figure}[htbp]
    \centering
    \includegraphics[width=0.75\linewidth]{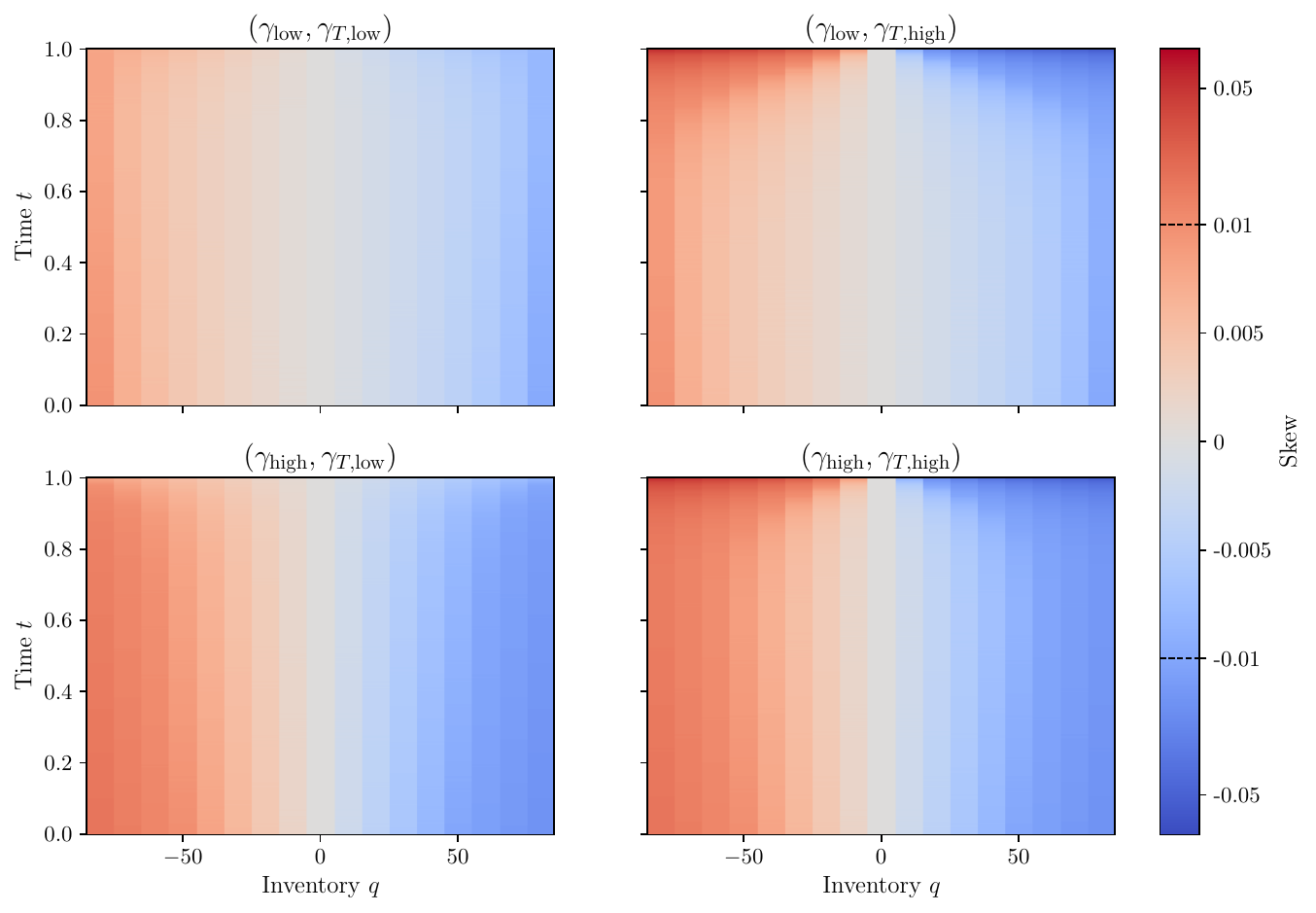}
    \caption{Skew of the optimal quotes at $p=\nicefrac{1}{2}$ as a function of time $t$ and inventory $q$ for different values of $(\gamma,\gamma_T)$. The color scale uses symmetric logarithmic normalization with a linear region around zero up to a threshold of $0.01$. The dashed lines in the colorbar mark the transition between the linear and logarithmic regions.}
    \label{fig:skew_2x2_qt_p0p50}
\end{figure}

Across all parameter combinations, the skew is decreasing in inventory and, as before, is zero when inventory is flat. Increasing the running risk aversion $\gamma$ mainly affects early times, when inventory is costly over a longer remaining horizon. Its effect weakens as settlement approaches. By contrast, increasing the terminal risk aversion $\gamma_T$ has the strongest effect near settlement, where inventory must be reduced over a short time interval, but its influence also propagates backward in time.

\subsection{Monte Carlo Simulation}
We compare the optimal quoting strategy with a myopic benchmark that maximizes instantaneous expected mark-to-market profit and ignores inventory risk. Both strategies are evaluated on the same simulated price paths and payoffs.

We simulate the price processes starting from $p_0 = \nicefrac{1}{2}$ on a uniform grid using the Euler--Maruyama method and project prices onto $[10^{-3},1-10^{-3}]$. At each time step, bid and ask market order arrivals are sampled independently from Poisson distributions whose parameters are determined by the current state, quotes, and inventory constraints.\footnote{Only the optimal strategy is subject to the inventory constraint, whereas the myopic benchmark is simulated without an inventory bound. In the reported specification, however, the benchmark exceeds the corresponding inventory bound only very rarely. Imposing the bound would require the benchmark to withdraw the quote on the constrained side, which would reduce its expected PnL.} Inventory and cash are then updated according to the executed buy and sell orders. At maturity, the contract payoff is sampled as
\[
    Y \sim \mathrm{Bernoulli}(p_T).
\]
Since $X_0=q_0=0$, the terminal profit and loss is
\[
    \mathrm{PnL}=X_T+q_TY.
\]

The results, based on $10\,000$ Monte Carlo paths, are reported in Table~\ref{tab:performance_metrics}.
\begin{table}[htbp]
\centering
\caption{Performance comparison of the baseline and optimal quoting strategies based on $10\,000$ simulated paths. The table reports the mean and standard deviation of PnL, the average absolute terminal inventory $|q_T|$, and the 5\% value at risk $\mathrm{VaR}_{5\%}$ and expected shortfall $\mathrm{ES}_{5\%}$.}
\label{tab:performance_metrics}
\begin{tabular}{lrrrrr}
\toprule
Strategy & $\mathrm{PnL}$ & $\mathrm{std}(\mathrm{PnL})$ & $|q_T|$ & $\mathrm{VaR}_{5\%}$ & $\mathrm{ES}_{5\%}$ \\
\midrule
Baseline & 12.47 & 28.11 & 49.37 & 32.41 & 40.50 \\
Optimal & 12.39 & 10.34 & 15.23 & 4.20 & 9.68 \\
\bottomrule
\end{tabular}
\end{table}
The optimal strategy attains nearly the same mean PnL as the baseline strategy, but with substantially lower risk.\ In particular, it reduces the standard deviation of PnL from $28.11$ to $10.34$, the average absolute terminal inventory from $49.37$ to $15.23$, the $5\%$ value at risk from $32.41$ to $4.20$, and the $5\%$ expected shortfall from $40.50$ to $9.68$.

These findings show that the optimal strategy achieves a large reduction in inventory and downside risk at only a small cost in expected profit.
\begin{figure}[htbp]
    \centering
    \begin{subfigure}{0.48\linewidth}
        \centering
        \includegraphics[width=\linewidth]{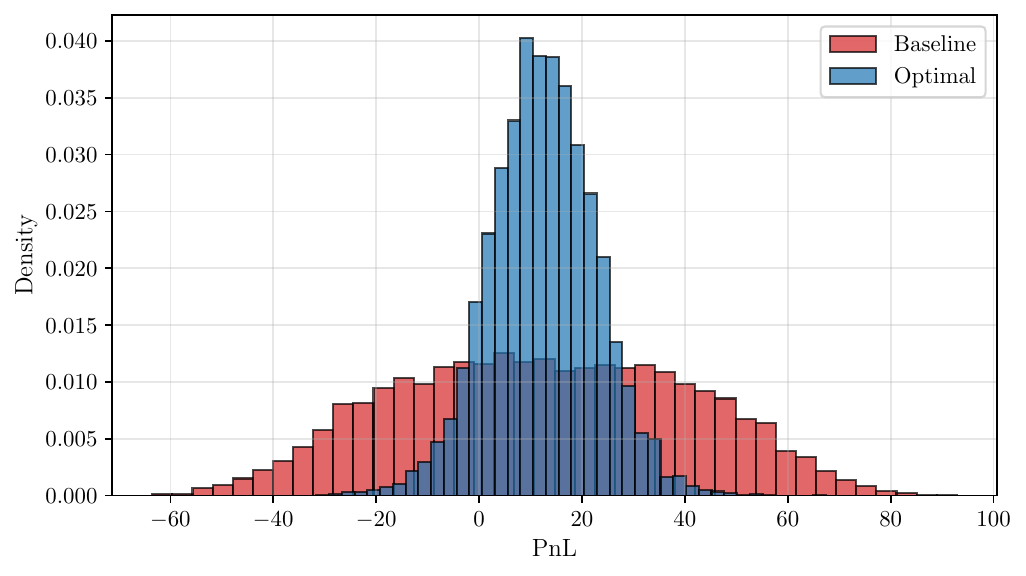}
        \caption{Final profit and loss.}
        \label{fig:pnl_distribution_baseline_our_strategy}
    \end{subfigure}
    \hfill
    \begin{subfigure}{0.48\linewidth}
        \centering
        \includegraphics[width=\linewidth]{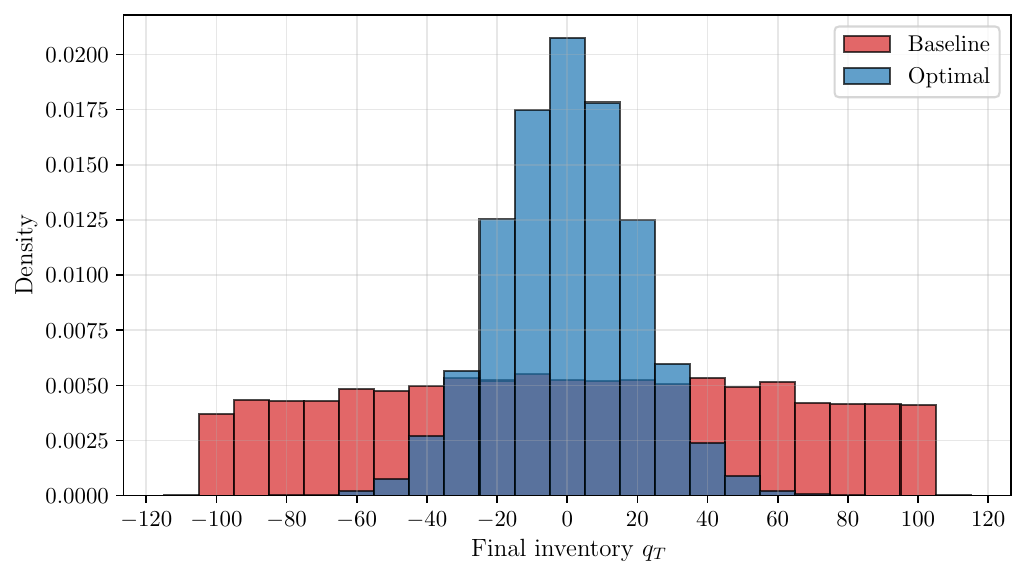}
        \caption{Terminal inventory.}
        \label{fig:inventory_distribution_baseline_our_strategy}
    \end{subfigure}
    \caption{Distributions of final profit and loss and terminal inventory for the baseline and optimal strategies based on $10\,000$ simulated paths.}
    \label{fig:pnl_inventory_distributions}
\end{figure}
Figures~\ref{fig:pnl_distribution_baseline_our_strategy} and~\ref{fig:inventory_distribution_baseline_our_strategy} confirm this pattern. Relative to the baseline, the optimal strategy produces a more concentrated PnL distribution and keeps terminal inventory much closer to zero.

\section{Conclusion}\label{sec:conclusion}
We developed a stochastic control framework for market making in prediction markets and derived the associated optimal quoting strategy.\ The model reflects key features of prediction market contracts.\ Prices are interpreted as conditional probabilities and therefore take values in $(0,1)$, contracts settle at a fixed terminal time according to a binary outcome, and admissible quotes are constrained to the interval $[0,1]$.\ The market maker maximizes expected terminal wealth while penalizing both running inventory exposure and terminal settlement risk.

The resulting Hamilton--Jacobi--Bellman equation was reduced from four to three dimensions.\ We then established existence and uniqueness of a classical solution to the reduced equation and characterized the optimal bid and ask quotes.

The numerical analysis illustrates the dependence of the optimal quotes on inventory, time to settlement, price, and risk aversion. In particular, it shows how asymmetric order arrival intensities generate skew and how inventory risk becomes less relevant as prices approach zero or one.\ The simulation study further demonstrates the risk-reduction effect of the optimal quoting strategy.\ Relative to a myopic benchmark that maximizes instantaneous expected mark-to-market profit, the optimal strategy substantially reduces downside risk at only a small cost in expected profit.

\appendix 
\section{Auxiliary Results}\label{app:deferred_market_model_and_control_problem}
The following lemma establishes boundedness and Lipschitz continuity of the first and second derivatives of the transformation function $f$ under the assumption that $g=\frac{f''}{f'}$ is bounded and Lipschitz continuous.
\begin{lemma}\label{lem:f'_bounded}
    Let $f\in C^2\big(\R;(0,1)\big)$ satisfy
    \[
        \lim_{x\to-\infty} f(x)=0, 
        \qquad 
        \lim_{x\to\infty} f(x)=1,
    \]
    and $f'(x)>0$ for all $x\in\R$.\ If the function $g\colon\R\to\R$, defined by
    \[
        g(x) := \frac{f''(x)}{f'(x)}\qquad\text{for all }x\in \R,
    \]
    is bounded, then $f'$ is bounded and Lipschitz.\ If $g$ is, in addition, Lipschitz, it follows that $f''$ is also bounded and Lipschitz.
\end{lemma}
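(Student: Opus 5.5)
Since $f'>0$ and $g=(\log f')'$, the natural route is to write $f'(x)=f'(0)\exp\big(\int_0^x g(s)\,ds\big)$ and use the two boundary limits of $f$ to control $f'$.

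Boundedness of $f'$ is the main point. A bound on $g$ alone only gives $f'(x)\le f'(0)e^{\|g\|_\infty|x|}$, which is not enough, so I will instead use a local Harnack-type comparison. For any $x$ and $s\in[x,x+1]$ we have
\[
f'(s)=f'(x)\exp\Big(\int_x^s g\Big)\ge f'(x)e^{-\|g\|_\infty}.
\]
Integrating over $[x,x+1]$ and using that $f$ takes values in $(0,1)$ gives
\[
1> f(x+1)-f(x)=\int_x^{x+1}f'(s)\,ds\ge f'(x)e^{-\|g\|_\infty}.
\]
Hence $f'(x)\le e^{\|g\|_\infty}$ for every $x\in\R$. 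The limits at $\pm\infty$ are not even needed here; $0<f<1$ suffices. This is the step I regard as the crux, since everything else follows from it.

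The remaining claims are then routine.

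1. Since $f''=g f'$ and both factors are bounded, $f''$ is bounded with $\|f''\|_\infty\le \|g\|_\infty e^{\|g\|_\infty}$.
2. By the mean value theorem, a bounded derivative $f''$ makes $f'$ Lipschitz with constant $\|f''\|_\infty$. This completes the first assertion.
3. Assume now that $g$ is also Lipschitz. Write
\[
f''(x)-f''(y)=g(x)\big(f'(x)-f'(y)\big)+f'(y)\big(g(x)-g(y)\big).
\]
Then
\[
|f''(x)-f''(y)|\le\big(\|g\|_\infty\|f''\|_\infty+\|f'\|_\infty\,\mathrm{Lip}(g)\big)|x-y|,
\]
so $f''$ is Lipschitz. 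It is bounded by step 1.
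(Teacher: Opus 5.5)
Your proof is correct and is essentially the paper's argument: both use the bounded log-derivative to obtain $f'(y)\ge f'(x)e^{-\|g\|_\infty|x-y|}$ and then integrate this against the fact that $f$ increases by less than $1$. The paper integrates over the half-line $[x,\infty)$ inside a contradiction argument, which read directly gives the sharper bound $f'\le\|g\|_\infty$, whereas you integrate over $[x,x+1]$; the remaining steps ($f''=gf'$ bounded, $f'$ Lipschitz, and the product estimate for $f''$) match the paper's, and you spell them out in more detail.
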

\begin{proof}
    Since $f\in C^2\big(\R;(0,1)\big)$ with
    \[
        \lim_{x\to-\infty} f(x)=0\qquad\text{and}
        \qquad
        \lim_{x\to\infty} f(x)=1,
    \]
    we have
    \begin{equation}\label{eq.proof.lemma.A1}
    \int_{-\infty}^{\infty} f'(x)\,dx = 1.
    \end{equation}
    Moreover, since
    \[
        \dv{}{x}\ln f'(x)=g(x),
    \]
    and $g$ is bounded, there exists $M>0$ such that
    \[
        |\ln f'(x)-\ln f'(y)|\le M|x-y|
    \]
    for all $x,y\in\R$. Hence, we obtain
    \[
        f'(y)\ge f'(x)e^{-M|x-y|}
    \]
    for all $x,y\in\R$.\ Now suppose, towards a contradiction, that $f'$ is unbounded. Then there exists a sequence $(x_n)_{n\in\N}\subseteq\R$ such that $f'(x_n)\to\infty$.\ Since $f'(x)>0$ for all $x\in\R$, by \eqref{eq.proof.lemma.A1}, we obtain the contradiction
    \[
        1 \geq  \int_{x_n}^{\infty} f'(y)\, {\rm d}y \ge \int_{x_n}^{\infty} f'(x_n)e^{M(x_n-y)}\, {\rm d}y=\frac{f'(x_n)}{M}\to\infty.
    \]
    We have therefore shown that $f'$ is bounded.\ Hence, $f''=gf'$ is bounded as a product of bounded functions.\ If $g$ is, in addition, Lipschitz, it follows that $f''$ is bounded and Lipschitz as a product of bounded and Lipschitz functions.
\end{proof}

The following lemma reports a standard a priori estimate for the latent belief process, see, e.g.\ \cite[Proof of Theorem 5.2.1, p.\ 71]{oksendal}.\ For the sake of a self-contained exposition, we provide a short proof.

\begin{lemma}
Assume that Assumption \ref{ass.global} is satisfied.\ Then, there exists a constant $C_L\geq 1$ such that
\begin{equation}\label{eq.aprioi.spacetime}
\E\big[\big|L_s^{t_1,x_1}-L_s^{t_2,x_2}\big|\big]\leq C_L\big(|t_1-t_2|^{1/2}+|x_1-x_2|\big)
\end{equation}
for all $t_1,t_2,s\in [0,T]$ with $s\geq t_1\vee t_2$ and $x_1,x_2\in \R$.
\end{lemma}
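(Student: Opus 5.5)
Without loss of generality we take $t_1\le t_2\le s$. The plan is to split the difference through the intermediate point $L_{t_2}^{t_1,x_1}$, using the flow property of the strong solution. By pathwise uniqueness of \eqref{eq:sde.latent}, which holds because $\mu$ and $\sigma$ are bounded and Lipschitz in space, we have $L_s^{t_1,x_1}=L_s^{t_2,\xi}$ almost surely, where $\xi:=L_{t_2}^{t_1,x_1}$ is an $\F_{t_2}$-measurable initial condition. The problem therefore reduces to two estimates. The first is a \emph{spatial} stability estimate for two solutions started at the same time $t_2$ from $\F_{t_2}$-measurable initial data $\xi$ and $x_2$. The second is a \emph{short-time} estimate for $|\xi-x_1|$.

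For the short-time estimate, write
\[
\xi-x_1=\int_{t_1}^{t_2}\mu(r,L_r^{t_1,x_1})\,dr+\int_{t_1}^{t_2}\sigma(r,L_r^{t_1,x_1})\,dW_r .
\]
Since $\mu$ and $\sigma$ are bounded by some constant $K$, Cauchy--Schwarz and the It\^o isometry give
\[
\E|\xi-x_1|\le K|t_2-t_1|+K|t_2-t_1|^{1/2}\le K(\sqrt T+1)|t_1-t_2|^{1/2}.
\]

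For the spatial estimate, let $\eta$ be an $\F_{t_2}$-measurable initial condition and set $D_s:=L_s^{t_2,\xi}-L_s^{t_2,\eta}$. I would work in $L^2$. By It\^o's formula and the Lipschitz bounds on $\mu$ and $\sigma$, with a common constant $C'$,
\[
\E[D_s^2]\le \E[(\xi-\eta)^2]+C'\int_{t_2}^s\E[D_r^2]\,dr ,
\]
and Gronwall then yields $\E[D_s^2]\le e^{C'T}\E[(\xi-\eta)^2]$. This $L^2$ form is awkward, because the target bound is in $L^1$. A cleaner route is to condition on $\F_{t_2}$. The Brownian increments after $t_2$ are independent of $\F_{t_2}$, so the conditional law of $L_s^{t_2,\xi}-L_s^{t_2,\eta}$ given $\F_{t_2}$ equals the law of $L_s^{t_2,y}-L_s^{t_2,y'}$ evaluated at $(y,y')=(\xi,\eta)$. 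This is the standard Markov or freezing argument, and one can also avoid it by approximating $\xi$ with simple random variables. For deterministic $y,y'$ the $L^2$ bound above gives $\E|L_s^{t_2,y}-L_s^{t_2,y'}|\le e^{C'T/2}|y-y'|$. Taking expectations of the conditional version then gives
\[
\E|D_s|\le e^{C'T/2}\,\E|\xi-\eta| .
\]

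Combining the two steps with $\eta=x_2$ and applying $|\xi-x_2|\le|\xi-x_1|+|x_1-x_2|$, we obtain
\[
\E\big|L_s^{t_1,x_1}-L_s^{t_2,x_2}\big|\le e^{C'T/2}\Big(K(\sqrt T+1)|t_1-t_2|^{1/2}+|x_1-x_2|\Big),
\]
so one may take $C_L:=\max\{1,\,e^{C'T/2}K(\sqrt T+1),\,e^{C'T/2}\}$.

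The main obstacle is the passage from $L^2$ to $L^1$ with a random, $\F_{t_2}$-measurable initial point while keeping linear dependence on $\E|\xi-x_2|$. This is handled by the conditioning argument, which relies on independence of the future increments of $W$ from $\F_{t_2}$ and on measurable dependence of the strong solution on its initial value. The other steps are routine consequences of the boundedness and Lipschitz continuity of $\mu=-ag$ and $\sigma$ established after \eqref{eq:def.mu}.
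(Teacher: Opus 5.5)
Your proof is correct and follows essentially the same route as the paper. Both arguments combine three ingredients: an $L^2$--Gronwall spatial Lipschitz estimate, the flow property from pathwise uniqueness (restarting at time $t_2$ from $L_{t_2}^{t_1,x_1}$), and the short-time bound $\E|L_{t_2}^{t_1,x_1}-x_1|\le C|t_2-t_1|^{1/2}$, which comes from the boundedness of $\mu$ and $\sigma$.

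The only differences are minor. The paper treats the cases $t_1=t_2$ and $x_1=x_2$ separately and combines them by the triangle inequality, whereas you apply the spatial estimate once, to $\xi$ versus $x_2$. In addition, your conditioning argument spells out the passage to random $\F_{t_2}$-measurable initial data, which the paper uses without comment in its second case.
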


\begin{proof}
We start with the case $t_1=t_2=:t$.\ Then, using the Lipschitz continuity of $\sigma$ and $g$,
\begin{align*}
\E\big[\big|L_s^{t,x_1}-L_s^{t,x_2}\big|^2\big]&\leq 3|x_1-x_2|^2+3\int_t^s \E\big[\big|\sigma(u,L_u^{t,x_1})-\sigma(u,L_u^{t,x_2})\big|^2\big]\, du\\
&\quad+3(s-t)\int_t^s \E\big[\big|\mu(u,L_u^{t,x_1})-\mu(u,L_u^{t,x_2})\big|^2\big]\, du\\
&\leq 3|x_1-x_2|^2+3C^2(1+T) \int_t^s \E\big[\big|L_u^{t,x_1}-L_u^{t,x_2}\big|^2\big]\, du.
\end{align*}
Using Gronwall's lemma, it follows that
\[
\E\big[\big|L_s^{t,x_1}-L_s^{t,x_2}\big|\big]\leq \sqrt{3}e^{\frac32C^2(1+T)T}|x_1-x_2|.
\]
Next, we prove the statement for $t_1\leq t_2$ and $x_1=x_2=:x$.\ Using the strong uniqueness of the SDE \eqref{eq:sde.latent} as well as the uniform boundedness of $\sigma$ and $g$,
\begin{align*}
\E\big[\big|L_s^{t_1,x}-L_s^{t_2,x}\big|\big]&\leq  \sqrt{3}e^{\frac32C^2(1+T)T} \E\big[\big|L_{t_2}^{t_1,x}-x\big|\big]\\
&\leq  \sqrt{3}e^{\frac32C^2(1+T)T}\big( C(t_2-t_1)^{1/2}+C^2\|g\|_\infty(t_2-t_1)\big)\\
&\leq \sqrt{3}e^{\frac32C^2(1+T)T}C\big(1+C\|g\|_\infty\sqrt{T}\big)|t_1-t_2|^{1/2}.
\end{align*}
Now, the claim follows by choosing
\[
C_L:=\sqrt{3}e^{\frac32C^2(1+T)T}\max\Big\{1,C\big(1+C\|g\|_\infty\sqrt{T}\big)\Big\}.
\]
\end{proof}

\section{Proof of Proposition \ref{lem:H_unique_maximizer_pred_model_general}}\label{sec:proof.unique.maximizer}

We next prove Proposition~\ref{lem:H_unique_maximizer_pred_model_general}, which characterizes the maximizers appearing in the Hamiltonians $H^b$ and $H^a$.\ In the proof, we adopt ideas from \cite{Gueant2017} and transfer them to our setup.

\begin{proof}[Proof of Proposition~\ref{lem:H_unique_maximizer_pred_model_general}]
Let $(t,p)\in[0,T]\times(0,1)$ and $\circ\in\{b,a\}$. For ease of notation, we write $\Lambda^\circ(\,\cdot\,):=\Lambda^\circ(t,p,\,\cdot\,)$, $G^\circ(\,\cdot\,):=G^\circ(p,\,\cdot\,)$, and $H^\circ(\,\cdot\,):=H^\circ(t,p;\,\cdot\,)$.\ Hence, for instance, $(\Lambda^\circ)'(\pi) = \partial_\pi \Lambda(t,p,\pi)$. In addition, we omit the subscript $_{t,p}$, so that, for example, $\Psi^\circ(\pi;z):=\Psi^\circ_{t,p}(\pi;z)$ and $u^\circ := u^\circ_{t,p}$.

\begin{enumerate}[(i)]
    \item For any $z\in\R$, the function $\pi\mapsto\Psi^\circ(\pi;z)$ is continuous on the compact interval $[0,1]$, so that it admits a maximizer.

    For $\pi\in(0,1)$, we have
    \[
        \partial_\pi \Psi^\circ(\pi;z)
        =(\Lambda^\circ)'(\pi)\,(G^\circ(\pi)-z)+\Lambda^\circ(\pi)\,(G^\circ)'(\pi).
    \]
    If $\circ=b$, then $(G^\circ)'(\pi)=-1$, so
    \begin{align*}
    \partial_\pi \Psi^b(\pi;z)
    &=(\Lambda^b)'(\pi)\bigg(G^b(\pi)-z-\frac{\Lambda^b(\pi)}{(\Lambda^b)'(\pi)}\bigg)\\
    &=(\Lambda^b)'(\pi)\bigl(u^b(\pi)-z\bigr).
    \end{align*}
    If $\circ=a$, then $(G^\circ)'(\pi)=1$, so
    \begin{align*}
    \partial_\pi \Psi^a(\pi;z)
    &=(\Lambda^a)'(\pi)\bigg(G^a(\pi)-z+\frac{\Lambda^a(\pi)}{(\Lambda^a)'(\pi)}\bigg)\\
    &=(\Lambda^a)'(\pi)\bigl(u^a(\pi)-z\bigr).
    \end{align*}
    Differentiating $u^\circ$ yields
    \[
        (u^b)'(\pi)
        =-2+\frac{\Lambda^b(\pi)\,(\Lambda^b)''(\pi)}
        {[(\Lambda^b)'(\pi)]^2}
    \]
    and
    \[
        (u^a)'(\pi)
        =2-\frac{\Lambda^a(\pi)\,(\Lambda^a)''(\pi)}
        {[(\Lambda^a)'(\pi)]^2}.
    \]
    Hence, by the curvature condition, we obtain $(u^b)'(\pi)<-\kappa<0$ and $(u^a)'(\pi)>\kappa>0$ for every $\pi\in(0,1)$ and a constant $\kappa>0$ independent of $\pi$. We use these monotonicity properties in the following cases.
    
    \emph{Case 1}:\ Let $z\in u^\circ\big((0,1)\big)$.\ Since $(\Lambda^b)'(\pi)$ is strictly positive and $(\Lambda^a)'(\pi)$ is strictly negative, we find that $\pi\mapsto\Psi^\circ(\pi;z)$ is strictly increasing up to the point where $u^\circ(\pi)=z$ and strictly decreasing afterwards if such a $\pi\in(0,1)$ exists. Indeed, since $z\in u^\circ((0,1))$ and $u^\circ$ is strictly monotone on $(0,1)$, there exists a unique $\pi\in(0,1)$ such that $u^\circ(\pi)=z$, which is given by $\pi^{\circ,*}(z) = \l(u^\circ\r)^{-1}(z)$.
    
    \emph{Case 2}:\ Let $z\notin u^\circ\big((0,1)\big)$.\ Consider the case $\circ=b$. Since $u^b$ is strictly decreasing on $(0,1)$, we have:
    \begin{itemize}
    \item If $z\ge u^b(0+)$: Then $u^b(\pi)-z<0$ for all $\pi\in(0,1)$ and hence
    $\partial_\pi\Psi^b(\pi;z)<0$ on $(0,1)$. Since $\Psi^b(\,\cdot\,;z)$ is continuous on $[0,1]$, it follows that it is strictly decreasing on $[0,1]$. Thus, the unique maximizer is given by $\pi^{b,*}(z)=0$.
    \item If $z\le u^b(1-)$: Then $u^b(\pi)-z>0$ for all
    $\pi\in(0,1)$ and hence
    $\partial_\pi\Psi^b(\pi;z)>0$ on $(0,1)$. Similar to above, we find that the unique maximizer is given by $\pi^{b,*}(z)=1$.
    \end{itemize}

    The argument for $\circ=a$ is analogous.

    \item Set $J^\circ:=u^\circ\big((0,1)\big)$. From part (i) we know that for any $z\in J^\circ$, we have $\pi^{\circ,*}(z)\in(0,1)$ and $u^\circ\big(\pi^{\circ,*}(z)\big) = z$.\
    Define
    $$F(z,\pi) := u^\circ(\pi)-z\qquad\text{for }(z,\pi)\in J^\circ\times(0,1).$$
    Then, $F\l(z,\pi^{\circ,*}(z)\r) = 0$ and $\partial_\pi F(z,\pi) = \l(u^\circ\r)'(\pi)\neq0$.\ Since $\Lambda^\circ$ is of class $C^2$, $\l(u^\circ\r)'$ is continuous and hence $F$ is continuously differentiable. By the implicit function theorem, the map $z\mapsto\pi^{\circ,*}(z)$ is of class $C^1$ on $J^\circ$ and
    \[
        \l(\pi^{\circ,*}\r)'(z) = \frac{1}{\l(u^\circ\r)'\big(\pi^{\circ,*}(z)\big)}.
    \]
    Since $\l(u^b\r)'<0$ and $\l(u^a\r)'>0$, the claim follows.

    \item Let $z_1,z_2\in\R$ with $z_1<z_2$ and $\pi\in[0,1]$. Then,
    \[
    \Psi(\pi;z_1)
    =\Lambda^\circ(\pi)\bigl(G^\circ(\pi)-z_1\bigr)
    \ge \Lambda^\circ(\pi)\bigl(G^\circ(\pi)-z_2\bigr)
    =\Psi^\circ(\pi;z_2).
    \]
    Taking the supremum over $\pi\in[0,1]$ and multiplying by $\Delta>0$ yields
    $H^\circ(z_1)\ge H^\circ(z_2)$. Now let $z\in J^\circ$. Define
    \[
        K^\circ(z,\pi):= \Delta \Lambda^\circ(\pi)(G^\circ(\pi)-z) \qquad \text{for }(z,\pi)\in J^\circ\times[0,1].
    \]
    Then, the envelope theorem gives
    \[
        (H^\circ)'(z)
        = \partial_z K^\circ\l(z,\pi^{\circ,*}(z)\r)
        = -\Delta \Lambda^\circ\l(\pi^{\circ,*}(z)\r),
    \]
    and, since $\pi^{\circ,*}$ is of class $C^1$ on $J^\circ$, we obtain that $H^\circ$ is of class $C^2$ on $J^\circ$.
    \item By (iii), we have
    \[
        -(H^\circ)'(z)
        =\Delta \Lambda^\circ\big(\pi^{\circ,*}(z)\big)
    \]
    for $z\in J^\circ$. Since $\Lambda^\circ$ is strictly monotone on $(0,1)$, it is invertible on its image. Thus, we obtain
    \[
        \pi^{\circ,*}(z) = \l(\Lambda^\circ\r)^{-1}\l(-\frac{(H^\circ)'(z)}{\Delta}\r).
    \]
    \item Let $(t,p,z)\in[0,T]\times(0,1)\times\R$ and let $(t_n,p_n,z_n)_{n\in\N}\subseteq[0,T]\times(0,1)\times\R$ with $(t_n,p_n,z_n)\longrightarrow(t,p,z)$. Set
    \[
        \pi_n:=\pi_{t_n,p_n}^{\circ,*}(z_n).
    \]
    Let $(\pi_{n_k})_{k\in\N}$ be an arbitrary convergent subsequence of
    $(\pi_n)_{n\in\N}$, with
    \[
        \pi_{n_k}\to\bar\pi\in[0,1].
    \]
    By the optimality of $\pi_{n_k}$, for every $\pi\in[0,1]$,
    \[
        \Psi_{t_{n_k},p_{n_k}}^{\circ}(\pi_{n_k};z_{n_k})
        \geq
        \Psi_{t_{n_k},p_{n_k}}^{\circ}(\pi;z_{n_k}).
    \]
    Since the map
    \[
        (t,p,z,\pi)\to\Psi_{t,p}^{\circ}(\pi;z)
    \]
    is continuous, passing to the limit yields
    \[
        \Psi_{t,p}^{\circ}(\bar\pi;z)
        \ge
        \Psi_{t,p}^{\circ}(\pi;z)
    \]
    for every $\pi\in[0,1]$. Hence, $\bar\pi$ is a maximizer of
    $\pi\mapsto\Psi_{t,p}^{\circ}(\pi;z)$. By uniqueness of the maximizer
    established in part~(i),
    \[
        \bar\pi=\pi_{t,p}^{\circ,*}(z).
    \]
    Thus, every convergent subsequence of $(\pi_n)_{n\in\N}$ has limit
    $\pi_{t,p}^{\circ,*}(z)$. Since $(\pi_n)_{n\in\N}$ takes values in the
    compact set $[0,1]$, it follows that
    \[
        \pi_n\to\pi_{t,p}^{\circ,*}(z)\qquad \text{as }n\to \infty.
    \]
    Consequently, the map $(t,p,z)\mapsto\pi_{t,p}^{\circ,*}(z)$
    is continuous.
\end{enumerate}
\end{proof}

\section{Proof of Theorem \ref{thm:verification}}\label{app:proof.verification}

\begin{proof}[Proof of Theorem \ref{thm:verification}]
Fix $(t,p,q,x)\in[0,T]\times(0,1)\times\mathcal Q\times\R$. If
$t=T$, the result follows immediately from the terminal condition.\ Hence we assume that $t<T$. Let $\pi\in\mathcal A(t)$ be arbitrary.\ For ease of notation, we write
\[
        X_s=X_s^{t,p,q,x,\pi},\qquad
        q_s=q_s^{t,p,q,\pi},\qquad\text{and}\qquad
        p_s=p_s^{t,p}.
\]

Define the jump increments
\[
J_s^b := \Upsilon(s,p_s,q_{s-}+\Delta,X_{s-}-\Delta\pi_s^b)
-
\Upsilon(s,p_s,q_{s-},X_{s-})
\]
and
\[
J_s^a :=
\Upsilon(s,p_s,q_{s-}-\Delta,X_{s-}+\Delta\pi_s^a)
-
\Upsilon(s,p_s,q_{s-},X_{s-}).
\]
The processes $J^b$ and $J^a$ are predictable.\ Moreover, since $p_s\in(0,1)$, $\pi_s^a,\pi_s^b\in[0,1]$, and $V$ is bounded, there
is a constant $C_J<\infty$ such that
\[
        |J_s^b|+|J_s^a|\le C_J.
\]
Together with the boundedness of $\Lambda^b$ and $\Lambda^a$, this implies
\[
        \E\bigg[
        \int_t^T |J_s^b|\lambda_s^b\,ds
        + \int_t^T |J_s^a|\lambda_s^a\,ds
        \bigg]<\infty,
\]
whereby $\lambda_s^b = \Lambda^b(s,p_s,\pi_s^b)\1_{\{q_{s-}<Q\}}$ and $\lambda_s^a = \Lambda^a(s,p_s,\pi_s^a)\1_{\{q_{s-}>-Q\}}$.\ Therefore, the compensated jump integrals
\[
        \int_t^\cdot J_s^b\,\big(dN_s^b-\lambda_s^b\,ds\big) \qquad \text{and}\qquad
        \int_t^\cdot J_s^a\,\big(dN_s^a-\lambda_s^a\,ds\big)
\]
are martingales, so that
\[
    \E\bigg[\int_t^T J_s^b\, dN_s^b + \int_t^T J_s^a \,dN_s^a\bigg] = \E\bigg[\int_t^T J_s^b \lambda_s^b\,ds + \int_t^T J_s^a \lambda_s^a\,ds\bigg].
\]

To localize the Brownian part, let $\ell=f^{-1}(p)$, set
\[
        R_n:=n+|x|, \qquad \theta_n:=T-\frac{T-t}{n+1},
\]
and define 
\[
    \tau_n := \inf\big\{s\in[t,T]: |L^{t,\ell}_s|\ge R_n\big\}\wedge \theta_n.
\]
Since $L^{t,\ell}$ is continuous on $[t,T]$, we have $\tau_n\uparrow T$ almost
surely.

Set
\[
    \eta_s := \big(q_{s-}+\partial_pV(s,p_s,q_{s-})\big)\varsigma(s,p_s).
\]
On $[t,\tau_n]$, we have $s\le\theta_n<T$ and
$p_s=f(L_s^{t,\ell})\in f\big([-R_n,R_n]\big)$, which is a compact subset of $(0,1)$. Since $V$ is a classical solution to \eqref{eq:HJB_PDE_reduced_pred}, $\partial_pV$ is continuous and therefore bounded on the compact set
\[
        [t,\theta_n]\times f\big([-R_n,R_n]\big)\times\mathcal Q.
\]
Thus $\eta$ is bounded on $[t,\tau_n]$. Consequently,
\[
    \E\bigg[\int_t^{\tau_n}\eta_s\,dW_s\bigg]=0.
\]
Applying Itô's formula to $\Upsilon$ on the stopped interval $[t,\tau_n]$ gives
\[
\begin{aligned}
&\Upsilon(\tau_n,p_{\tau_n},q_{\tau_n-},X_{\tau_n-})
=
\Upsilon(t,p,q,x)\\
&\qquad+
\int_t^{\tau_n}
\left(
\partial_t V(s,p_s,q_{s-})
+ \frac{1}{2}\varsigma(s,p_s)^2
\partial_{pp}^2V(s,p_s,q_{s-})
\right)ds
\\
&\qquad
+ \int_t^{\tau_n}
\big(q_{s-}+\partial_pV(s,p_s,q_{s-})\big)
\varsigma(s,p_s)\,dW_s
\\
&\qquad
+
\int_t^{\tau_n}
\Big[\Upsilon(s,p_s,q_{s-}+\Delta,X_{s-}-\Delta\pi_s^b)
- \Upsilon(s,p_s,q_{s-},X_{s-})\Big]dN_s^b
\\
&\qquad
+
\int_t^{\tau_n}
\Big[\Upsilon(s,p_s,q_{s-}-\Delta,X_{s-}+\Delta\pi_s^a)
- \Upsilon(s,p_s,q_{s-},X_{s-}) \Big]dN_s^a,
\end{aligned}
\]
so that taking expectations yields
\begin{align}\label{eq:upsilon_eq_verification_theorem}
    \E\big[\Upsilon(\tau_n,p_{\tau_n},q_{\tau_n-},X_{\tau_n-})\big] = \Upsilon(t,p,q,x) + \E\bigg[\int_t^{\tau_n} \mathcal{D}_s^\pi ds\bigg],
\end{align}
where
\[
    \mathcal{D}_s^\pi = \partial_t V(s,p_s,q_{s-})
    + \frac{1}{2}\varsigma(s,p_s)^2 \partial_{pp}^2V(s,p_s,q_{s-})
    + \lambda_s^bJ_s^b + \lambda_s^aJ_s^a.
\]
Whenever $q_{s-}<Q$, we have
\[
\begin{aligned}
J_s^b
&=
-\Delta\pi_s^b+\Delta p_s
+ V(s,p_s,q_{s-}+\Delta)-V(s,p_s,q_{s-})
\\
&=
\Delta
\left(
p_s-\pi_s^b
- \frac{V(s,p_s,q_{s-})-V(s,p_s,q_{s-}+\Delta)}{\Delta}
\right)
\\
&=
\Delta
\big(
G^b(p_s,\pi_s^b)-z_b(s,p_s,q_{s-})
\big).
\end{aligned}
\]
Similarly, whenever $q_{s-}>-Q$,
\[
    J_s^a = \Delta\big(G^a(p_s,\pi_s^a)-z_a(s,p_s,q_{s-})\big).
\]
Hence, we obtain
\[
\begin{aligned}
\mathcal{D}_s^\pi
&=
\partial_t V(s,p_s,q_{s-})
+ \frac{1}{2}\varsigma(s,p_s)^2
\partial_{pp}^2V(s,p_s,q_{s-})
\\
&\quad
+ \1_{\{q_{s-}<Q\}}
\Delta\Lambda^b(s,p_s,\pi_s^b)
\bigl(G^b(p_s,\pi_s^b)-z_b(s,p_s,q_{s-})\bigr)
\\
&\quad
+ \1_{\{q_{s-}>-Q\}}
\Delta\Lambda^a(s,p_s,\pi_s^a)
\bigl(G^a(p_s,\pi_s^a)-z_a(s,p_s,q_{s-})\bigr)\\
&\le \partial_t V(s,p_s,q_{s-})
+ \frac{1}{2}\varsigma(s,p_s)^2
\partial_{pp}^2V(s,p_s,q_{s-})
\\
&\quad
+ \1_{\{q_{s-}<Q\}}
H^b\big(s,p_s;z_b(s,p_s,q_{s-})\big)\\
&\quad
+ \1_{\{q_{s-}>-Q\}}
H^a\big(s,p_s;z_a(s,p_s,q_{s-})\big).
\end{aligned}
\]
Using the reduced Hamilton--Jacobi--Bellman equation \eqref{eq:HJB_PDE_reduced_pred}, this yields
\[
        \mathcal{D}_s^\pi
        \le
        \gamma q_{s-}^2\varsigma(s,p_s)^2.
\]
Combining this with \eqref{eq:upsilon_eq_verification_theorem}, using that $q_s=q_{s-}$ and $X_s=X_{s-}$ for Lebesgue-a.e.\ $s\in (t,T)$, and observing that
\[
\Delta N^b_{\tau_n}=\Delta N^a_{\tau_n}=0
\qquad \P\text{-a.s.},
\]
we obtain
\[
\begin{aligned}
\E\left[
\Upsilon(\tau_n,p_{\tau_n},q_{\tau_n},X_{\tau_n}) - \gamma\int_t^{\tau_n}q_s^2\varsigma(s,p_s)^2\,ds
\right]
\le
\Upsilon(t,p,q,x).
\end{aligned}
\] 
By dominated convergence, letting $n\to\infty$ yields
\[
\begin{aligned}
\E\bigg[
\Upsilon(T,p_T,q_T,X_T) - \gamma\int_t^Tq_s^2\varsigma(s,p_s)^2\,ds
\bigg] \le \Upsilon(t,p,q,x),
\end{aligned}
\]
where we used the fact that, by \eqref{eq:value_reduction},
\[
    \Upsilon(t,p,q,x) = x + qp + V(t,p,q),
\]
$V$ is bounded, and $\E[X_{\tau_n}]\to \E[X_T]$ as $n\to \infty$ since $\Lambda^\circ$ is uniformly bounded for $\circ\in\{b,a\}$ and jumps of $X$ are of size at most $\Delta$ $\P$-a.s.\ Using the terminal condition
\[
        V(T,p,q)=\Phi(p,q),
\]
we obtain
\[
\E\bigg[X_T+q_Tp_T +\Phi(p_T,q_T)- \gamma\int_t^Tq_s^2\varsigma(s,p_s)^2\,ds\bigg] \le \Upsilon(t,p,q,x).
\]
Since \(\pi\in\mathcal A(t)\) was arbitrary, this proves
\[
        \sup_{\pi\in\mathcal A(t)}\E\bigg[X_T+q_Tp_T +\Phi(p_T,q_T)- \gamma\int_t^Tq_s^2\varsigma(s,p_s)^2\,ds\bigg] \le \Upsilon(t,p,q,x).
\]

It remains to prove equality.\ Since $q_{s-}$ is predictable, $p$ is continuous and adapted, and the unique, up to the boundary values, optimizer $\pi^{\circ,*}$ is continuous by Proposition~\ref{lem:H_unique_maximizer_pred_model_general}, and hence measurable, the process
\[
    \pi_s^* = \big(\pi^{b,*}(s,p_s,q_{s-}), \pi^{a,*}(s,p_s,q_{s-})\big)
\]
is predictable and takes values in $[0,1]^2$.\ Hence, $\pi^*\in\mathcal A(t)$ and, by Proposition \ref{lem:H_unique_maximizer_pred_model_general}, for this control, all inequalities above are equalities.\ Therefore, the upper bound is attained, which completes the proof.
\end{proof}

\section{Proof of Theorem \ref{thm.ex.unique.classical}}\label{app:deferred_existence_and_uniqueness_of_a_mild_solution}

\subsection{A priori estimate for the Hamiltonian}

The following lemma provides an a priori estimate for the operator $\mathcal H$, which will be used to establish existence and uniqueness of a mild solution to \eqref{eq:HJB_PDE_reduced_pred}.\ In the sequel, let
\[
\varsigma_f(t,x):=f'(x)\sigma(t,x)\quad \text{for }t\in [0,T]\text{ and }x\in \R.
\]
Since $f$ is bounded and $f'$ is bounded, we have $[f]_\alpha<\infty$ and $[\varsigma_f^2]_{\alpha/2,\alpha}<\infty$.

\begin{lemma}\label{lem:operator.calH}
Assume that Assumption \ref{ass.exunique} is satisfied.\ Let $\bar{\Lambda}>0$ be the common bound for $\Lambda^b$ and $\Lambda^a$, and $M\ge0$.\ Then, for all $t_1,t_2\in[0,T]$, $x_1,x_2\in\R$, 
$q\in Q$, and all bounded functions
$u_1,u_2\colon [0,T]\times\R\times Q\to\R$ with
\[
\max\big\{\|u_1(t_1,\,\cdot\,,\,\cdot\,)\|_\infty,\|u_2(t_2,\,\cdot\,,\,\cdot\,)\|_\infty\big\}
\le M,
\]
we have
\[
\begin{aligned}
&\big|\mathcal H u_1(t_1,x_1,q)-\mathcal H u_2(t_2,x_2,q)\big| \\
&\le
C_{\mathcal H}(M)
\big(|t_1-t_2|^{\alpha/2}+|x_1-x_2|^\alpha\big)
+
4\bar\Lambda
\sup_{\bar q\in Q}
|u_1(t_1,x_1,\bar q)-u_2(t_2,x_2,\bar q)|,
\end{aligned}
\]
where
\[
C_{\mathcal H}(M)
:=
\gamma Q^2[\varsigma_f^2]_{\alpha/2,\alpha}
+
\Delta\sum_{\circ\in\{b,a\}}
\bigg(
\big[\Lambda_f^\circ\big]_{\alpha/2,\alpha}
\bigg(1+\frac{2M}{\Delta}\bigg)
+
\bar\Lambda [f]_\alpha
\bigg).
\]
\end{lemma}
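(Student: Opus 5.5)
The plan is to split $\mathcal H u_1(t_1,x_1,q)-\mathcal H u_2(t_2,x_2,q)$ into its three constituents and estimate each separately. These are the running penalty $\gamma q^2\varsigma_f(t,x)^2$, the bid Hamiltonian term, and the ask Hamiltonian term. Here we use $\varsigma(t,f(x))=\varsigma_f(t,x)$.

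\textbf{Running penalty.} Since $|q|\le Q$, this term contributes at most
\[
\gamma Q^2[\varsigma_f^2]_{\alpha/2,\alpha}\bigl(|t_1-t_2|^{\alpha/2}+|x_1-x_2|^\alpha\bigr),
\]
which uses $d(z_1,z_2)^\alpha\le |t_1-t_2|^{\alpha/2}+|x_1-x_2|^\alpha$.

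\textbf{Hamiltonian terms.} Write $z_i:=\bigl(u_i(t_i,x_i,q)-u_i(t_i,x_i,q\pm\Delta)\bigr)/\Delta$, so that $|z_i|\le 2M/\Delta$ by the sup-bound on $u_i(t_i,\cdot,\cdot)$. Since each $H^\circ$ is $\Delta$ times a supremum over $\pi\in[0,1]$, we use the elementary inequality $|\sup_\pi F_1-\sup_\pi F_2|\le\sup_\pi|F_1-F_2|$. With
\[
F_i(\pi)=\Lambda^\circ_f(t_i,x_i,\pi)\bigl(G^\circ(f(x_i),\pi)-z_i\bigr),
\]
we add and subtract the mixed term $\Lambda^\circ_f(t_2,x_2,\pi)\bigl(G^\circ(f(x_1),\pi)-z_1\bigr)$. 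This yields three pieces.

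\begin{enumerate}
\item[(a)] The difference of intensities times $|G^\circ(f(x_1),\pi)-z_1|\le 1+2M/\Delta$, using $G^\circ\in[-1,1]$. This gives $[\Lambda_f^\circ]_{\alpha/2,\alpha}(1+2M/\Delta)\,d^\alpha$.
\item[(b)] $\bar\Lambda\,|f(x_1)-f(x_2)|\le\bar\Lambda[f]_\alpha|x_1-x_2|^\alpha$, since $G^\circ$ is $1$-Lipschitz in $p$.
\item[(c)] $\bar\Lambda|z_1-z_2|$. By the triangle inequality,
\[
|z_1-z_2|\le \frac{2}{\Delta}\sup_{\bar q}|u_1(t_1,x_1,\bar q)-u_2(t_2,x_2,\bar q)|.
\]
\end{enumerate}

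Multiplying by $\Delta$, each side ($b$ and $a$) contributes $\Delta\bigl([\Lambda_f^\circ](1+2M/\Delta)+\bar\Lambda[f]_\alpha\bigr)$ to the Hölder constant, plus $2\bar\Lambda\sup_{\bar q}|u_1-u_2|$. Summing both sides gives the factor $4\bar\Lambda$, and together with the running penalty bound this yields exactly $C_{\mathcal H}(M)$.

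A few points need checking along the way. The indicators $\1_{\{q<Q\}}$ and $\1_{\{q>-Q\}}$ are identical for both terms because $q$ is common, so they cause no issue. We also need $[f]_\alpha<\infty$, which holds because $f$ is bounded and Lipschitz by Lemma \ref{lem:f'_bounded}. The only delicate step is the bookkeeping in (c): we must bound by the sup over $\bar q$ at the fixed points $(t_i,x_i)$, not by global norms, and confirm that the constant $2/\Delta$ times $\Delta\bar\Lambda$ gives $2\bar\Lambda$ per side. Once that is confirmed, the rest is a direct calculation.
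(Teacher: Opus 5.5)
Your proposal is correct and takes essentially the same route as the paper. The paper first proves a fixed-$z$ estimate for $H^\circ$, using the same add-and-subtract on $\Lambda$ and $G^\circ$ as your (a)--(b), and then handles the change of $z$ through the $\Delta\bar\Lambda$-Lipschitz property of $H^\circ$ in $z$, which is your (c); doing all three splits inside one supremum is only a reorganization, and your constants $\Delta\bigl([\Lambda_f^\circ]_{\alpha/2,\alpha}(1+2M/\Delta)+\bar\Lambda[f]_\alpha\bigr)$ and $2\bar\Lambda$ per side coincide with the paper's.
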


\begin{proof}
We first prove an auxiliary estimate for the Hamiltonians.\ To that end, let
$\circ\in\{b,a\}$, $R\ge0$, $t_1,t_2\in[0,T]$, $x_1,x_2\in\R$, and $z\in[-R,R]$. We obtain
\[
\begin{aligned}
\big|H^\circ\big(t_1,f(x_1);z\big)-H^\circ&\big(t_2,f(x_2);z\big)\big| \\
&\le
\Delta
\sup_{\pi\in[0,1]}
\Big|
\Lambda^\circ\big(t_1,f(x_1),\pi\big)
\Big(G^\circ\big(f(x_1),\pi\big)-z\Big)\\
&\quad-
\Lambda^\circ\big(t_2,f(x_2),\pi\big)
\Big(G^\circ\big(f(x_2),\pi\big)-z\Big)
\Big|.
\end{aligned}
\]
For fixed $\pi\in[0,1]$, we have
\[
\begin{aligned}
\Big|
\Lambda^\circ\big(t_1,f(x_1),\pi\big)
\Big(&G^\circ\big(f(x_1),\pi\big)-z\Big)
-
\Lambda^\circ\big(t_2,f(x_2),\pi\big)
\Big(G^\circ\big(f(x_2),\pi\big)-z\Big)
\Big| \\
&\le
\big|\Lambda^\circ\big(t_1,f(x_1),\pi\big)
-
\Lambda^\circ\big(t_2,f(x_2),\pi\big)\big|
\,\big|G^\circ\big(f(x_1),\pi\big)-z\big| \\
&\quad+
\big|\Lambda^\circ\big(t_2,f(x_2),\pi\big)\big|
\,\big|G^\circ\big(f(x_1),\pi\big)-G^\circ\big(f(x_2),\pi\big)\big|.
\end{aligned}
\]
Since, for any $\pi\in[0,1]$, $\big|G^\circ\big(f(\,\cdot\,),\pi\big)\big|\le1$,
$|z|\le R$, and
\[
\big|G^\circ\big(f(x_1),\pi\big)-G^\circ\big(f(x_2),\pi\big)\big|
=
|f(x)-f(y)|
\le
[f]_\alpha |x-y|^\alpha,
\]
it follows that
\begin{equation}\label{eq:H_estimate_proof_eq}
\begin{aligned}
&\big|H^\circ\big(t_1,f(x_1);z\big)-H^\circ\big(t_2,f(x_2);z\big)\big| \\
&\le
\Delta
\left(
\big[\Lambda_f^\circ\big]_{\alpha/2,\alpha}(1+R)
+
\bar\Lambda [f]_\alpha
\right)
\big(|t_1-t_2|^{\alpha/2}+|x_1-x_2|^\alpha\big).
\end{aligned}
\end{equation}

Now let $u_1,u_2$ be bounded and assume that
\[
\max\big\{\|u_1(t_1,\,\cdot\,,\,\cdot\,)\|_\infty,\|u_2(t_2,\,\cdot\,,\,\cdot\,)\|_\infty\big\}
\le M.
\]
Whenever the neighboring inventory levels are admissible, define
\begin{align*}
&D_q^+u(t_1,x_1)
:=
\frac{u(t_1,x_1,q)-u(t_1,x_1,q+\Delta)}{\Delta},\\
&D_q^-u(t_1,x_1)
:=
\frac{u(t_1,x_1,q)-u(t_1,x_1,q-\Delta)}{\Delta},
\end{align*}
and analogously $D_q^+u_2(t_2,x_2)$ and $D_q^-u_2(t_2,x_2)$.\ Since
$\|u_2(t_2,\,\cdot\,,\,\cdot\,)\|_\infty\le M$, we have
\[
|D_q^+u_2(t_2,x_2)|\le \frac{2M}{\Delta},
\qquad
|D_q^-u_2(t_2,x_2)|\le \frac{2M}{\Delta}.
\]
We estimate the bid term.\ Notice that
\[
\begin{aligned}
&\big|H^b\big(t_1,f(x_1);D_q^+u_1(t_1,x_1)\big)
-
H^b\big(t_2,f(x_2);D_q^+u_2(t_2,x_2)\big)\big| \\
&\le
\big|H^b\big(t_1,f(x_1);D_q^+u_1(t_1,x_1)\big)
-
H^b\big(t_1,f(x_1);D_q^+u_2(t_2,x_2)\big)\big| \\
&\quad+
\big|H^b\big(t_1,f(x_1);D_q^+u_2(t_2,x_2)\big)
-
H^b\big(t_2,f(x_2);D_q^+u_2(t_2,x_2)\big)\big|.
\end{aligned}
\]
By definition of $H^b$,
\[
\begin{aligned}
&\big|H^b(t_1,f(x_1);D_q^+u_1(t_1,x_1)\big)
-
H^b\big(t_1,f(x_1);D_q^+u_2(t_2,x_2)\big)\big| \\
&\le
\Delta\bar\Lambda
|D_q^+u_1(t_1,x_1)-D_q^+u_2(t_2,x_2)| \\
&\le
2\bar\Lambda
\sup_{\bar q\in Q}
|u_1(t_1,x_1,\bar q)-u_2(t_2,x_2,\bar q)|.
\end{aligned}
\]
Moreover, by \eqref{eq:H_estimate_proof_eq} with $R=\frac{2M}{\Delta}$,
\[
\begin{aligned}
&|H^b(t_1,f(x_1);D_q^+u_2(t_2,x_2))
-
H^b(t_2,f(x_2);D_q^+u_2(t_2,x_2))| \\
&\le
\Delta
\bigg(
\big[\Lambda_f^b\big]_{\alpha/2,\alpha}
\bigg(1+\frac{2M}{\Delta}\bigg)
+
\bar\Lambda [f]_\alpha
\bigg)
\big(|t-s|^{\alpha/2}+|x-y|^\alpha\big).
\end{aligned}
\]
A similar argument applies to the ask term.\ Finally, the running term satisfies
\[
\begin{aligned}
&\gamma q^2
\big|\varsigma_f(t_1,x_1)^2-\varsigma_f(t_2,x_2)^2\big| \le
\gamma q^2[\varsigma_f^2]_{\alpha/2,\alpha}
\big(|t_1-t_2|^{\alpha/2}+|x_1-x_2|^\alpha\big).
\end{aligned}
\]
Combining the running, bid, and ask estimates gives
\[
\begin{aligned}
\big|\mathcal H u_1(t_1,x_1,q)-\mathcal H u_2(t_2,x_2,q)\big| &\le
C_{\mathcal H}(M)
\big(|t_1-t_2|^{\alpha/2}+|x_1-x_2|^\alpha\big)\\
&\quad +
4\bar\Lambda
\sup_{\bar q\in Q}
|u_1(t_1,x_1,\bar q)-u_2(t_2,x_2,\bar q)|.
\end{aligned}
\]
The proof is complete.
\end{proof}

\begin{remark}
The preceding lemma yields two useful special cases.

First, choosing $t_1=t_2=:t$ and $x_1=x_2=:x$, we obtain
\[
\big|\mathcal H u_1(t,x,q)-\mathcal H u_2(t,x,q)\big|
\le
4\bar\Lambda
\sup_{\bar q\in Q}
|u_1(t,x,\bar q)-u_2(t,x,\bar q)|.
\]

Second, choosing $u_1=u_2=:u$, we get
\[
\begin{aligned}
&\big|\mathcal H u(t_1,x_1,q)-\mathcal H u(t_2,x_2,q)\big| \\
&\le
C_{\mathcal H}(M)
\big(|t_1-t_2|^{\alpha/2}+|x_1-x_2|^\alpha\big) \\
&\quad+
4\bar\Lambda
\sup_{\bar q\in Q}
|u(t_1,x_1,\bar q)-u(t_2,x_2,\bar q)|.
\end{aligned}
\]
In particular, if $u$ is parabolic $\alpha$-Hölder continuous in $(t,x)$, then $\mathcal H u$ is parabolic $\alpha$-Hölder continuous in $(t,x)$.
\end{remark}

\subsection{Existence and Uniqueness of a Mild Solution}\label{app:deferred_existence_and_uniqueness_of_a_mild_solution.mild}
We now show the existence and uniqueness of a mild solution of the form \eqref{eq:def.mild.solution} to the transformed version of the HJB equation \eqref{eq:HJB_PDE_reduced_pred} with terminal condition \eqref{eq:terminal_pred_reduced} using Banach's fixed point theorem.

To that end, for $u\in \mathcal U:=C_b\big([0,T]\times \R\times \mathcal Q\big)$ and $(t,x,q)\in[0,T]\times\R\times\mathcal Q$, let
\[
 (\Gamma u)(t,x,q):=\E\big[\varphi\big(L_T^{t,x},q\big)\big]-
    \int_t^T
    \E\big[(\mathcal Hu)\big(s,L_s^{t,x},q\big)\big]\,ds,
\]
where
\[
\varphi(x,q):=\Phi\big(f(x),q\big)
\]
for $x\in \R$, $q\in \mathcal Q$, and fixed $\Phi\in C^\alpha\big((0,1)\times \mathcal Q\big)$.\ 

To prove the existence and uniqueness of a fixed point of $\Gamma$, we equip $\mathcal U$ with a weighted supremum norm.\ For $\beta>0$, define
\[
    \|u\|_\beta
    := \sup_{(t,x,q)\in [0,T]\times\R\times\mathcal Q}e^{-\beta(T-t)}|u(t,x,q)|.
\]
This norm is equivalent to the supremum norm since
\[
    e^{-\beta T}\|u\|_\infty \le \|u\|_{\beta} \le \|u\|_\infty\quad \text{for all }u\in \mathcal U.
\]
 In particular, $(\mathcal U,\|\cdot\|_\beta)$ is a Banach space. The exponential weight allows us to show that $\Gamma$ is a contraction with respect to the weighted supremum norm for sufficiently large $\beta$.\ Moreover, for $\alpha\in (0,1)$ and $u\in \mathcal U$, we consider the weighted H\"older seminorm
\[
[u]_{\alpha,\beta}:= \sup_{(t,q)\in [0,T]\times \mathcal Q}\sup_{\substack{x_1,x_2\in \R\\ x_1\neq x_2}}e^{-\beta(T-t)}\frac{|u(t,x_1,q)-u(t,x_2,q)|}{|x_1-x_2|^\alpha}\in [0,\infty].
\]
To ease notation, we set $[u]_\alpha:=[u]_{\alpha,0}$ for $u\in \mathcal U$.\ Again, we have
\[
e^{-\beta T}[u]_{\alpha} \le [u]_{\alpha,\beta} \le [u]_{\alpha}\quad \text{for }u\in \mathcal U.
\]
We now prove the main result of this subsection.

\begin{theorem}[Existence and Uniqueness of a Mild Solution]
\label{thm:existence_uniqueness_mild}
Assume that Assumption \ref{ass.exunique} is satisfied.\ Then, the transformed version of the reduced Hamilton--Jacobi--Bellman equation \eqref{eq:HJB_PDE_reduced_pred} with terminal condition \eqref{eq:terminal_pred_reduced} admits a unique mild solution.\ Moreover, the unique mild solution $v$ satisfies $|v|_{\alpha}<\infty$.\ Equivalently, the operator $\Gamma$ has a unique fixed point, and the fixed point $v$ satisfies $|v|_{\alpha}<\infty$.
\end{theorem}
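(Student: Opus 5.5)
The plan is to apply Banach's fixed point theorem to $\Gamma$ on a closed subset of $(\mathcal U,\|\cdot\|_\beta)$, and to show afterwards that the fixed point has finite H\"older norm.

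\textbf{Step 1: $\Gamma$ maps $\mathcal U$ into $\mathcal U$.} For $u\in\mathcal U$, boundedness of $\Gamma u$ is immediate. The first term is bounded by $\|\Phi\|_\infty$. For the second, $|\mathcal Hu|\le \gamma Q^2\|\varsigma_f\|_\infty^2+2\Delta\bar\Lambda(1+2\|u\|_\infty/\Delta)$, since $|H^\circ(t,p;z)|\le\Delta\bar\Lambda(1+|z|)$.

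Continuity of $\Gamma u$ in $(t,x)$ is more delicate, because $u$ is merely continuous.
\begin{itemize}
\item The map $(t,x)\mapsto\E[\varphi(L_T^{t,x},q)]$ is continuous. This follows from the a priori estimate \eqref{eq.aprioi.spacetime}, which gives $L^{t_n,x_n}_s\to L^{t,x}_s$ in $L^1$, together with dominated convergence for bounded continuous $\varphi$.
\item For the integral term, $\mathcal Hu(s,\cdot,q)$ is bounded and continuous. Here I use that $H^\circ$ is continuous in $(t,p,z)$, being a supremum over the compact set $[0,1]$ of a jointly continuous function.
\item Writing $\int_t^T=\int_0^T\1_{\{s\ge t\}}$, dominated convergence again gives continuity of the integral term.
\end{itemize}

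\textbf{Step 2: contraction.} By the first special case in the Remark after Lemma \ref{lem:operator.calH}, we have $|\mathcal Hu_1-\mathcal Hu_2|(s,y,q)\le4\bar\Lambda\sup_{\bar q}|u_1-u_2|(s,y,\bar q)\le 4\bar\Lambda e^{\beta(T-s)}\|u_1-u_2\|_\beta$. Hence
\[
e^{-\beta(T-t)}|\Gamma u_1-\Gamma u_2|(t,x,q)\le 4\bar\Lambda\|u_1-u_2\|_\beta\int_t^Te^{-\beta(s-t)}\,ds\le \frac{4\bar\Lambda}{\beta}\|u_1-u_2\|_\beta .
\]
Choosing $\beta>4\bar\Lambda$ makes $\Gamma$ a contraction with constant $\kappa=4\bar\Lambda/\beta$. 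Banach's theorem then yields a unique fixed point $v\in\mathcal U$, i.e.\ a unique mild solution.

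\textbf{Step 3: H\"older regularity (the main obstacle).} I would show that a closed subset
\[
\mathcal U_K:=\{u\in\mathcal U:\ \|u\|_\infty\le M_0,\ [u]_{\alpha,\beta}\le K\}
\]
is invariant under $\Gamma$ for suitable $M_0$, $K$ and large $\beta$. This set is closed under uniform convergence, since H\"older seminorms are lower semicontinuous, so the fixed point lies in it.

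The sup-bound comes first. A Gronwall-type estimate gives $\|\Gamma u\|_\infty\le\|\Phi\|_\infty+T(c_0+4\bar\Lambda M_0)$. This is problematic because it grows with $M_0$, so instead I would control $\|v\|_\infty$ directly for the fixed point. Since $|\mathcal Hv|\le c_0+4\bar\Lambda\sup_{\bar q}|v|$ (using $\mathcal H0$ bounded), we get $\sup_{x,q}|v(t,\cdot,\cdot)|\le\|\Phi\|_\infty+\int_t^T(c_0+4\bar\Lambda\sup|v(s)|)\,ds$. Gronwall then gives an explicit $M_0$, and I run the Picard iteration from $v^{(0)}=0$ within the set of functions obeying this time-dependent bound, which is invariant.

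For the spatial H\"older part, write $\varphi$ Lipschitz, which holds as $\varphi\in C^{2+\alpha}$, and use \eqref{eq.aprioi.spacetime}:
\[
|\E\varphi(L_T^{t,x_1})-\E\varphi(L_T^{t,x_2})|\le \|\varphi'\|_\infty C_L|x_1-x_2|,
\]
which is bounded by a constant times $|x_1-x_2|^\alpha$ for $|x_1-x_2|\le1$, and by $2\|\varphi\|_\infty$ otherwise. For the integral term, I use the second special case of the Remark after Lemma \ref{lem:operator.calH} with $t_1=t_2=s$. Together with Jensen and $\E|L_s^{t,x_1}-L_s^{t,x_2}|^\alpha\le (C_L|x_1-x_2|)^\alpha$, this gives
\[
|\E\mathcal Hu(s,L^{t,x_1}_s,q)-\E\mathcal Hu(s,L^{t,x_2}_s,q)|\le (C_{\mathcal H}(M_0)+4\bar\Lambda e^{\beta(T-s)}K)\,C_L^\alpha|x_1-x_2|^\alpha .
\]
Integrating against the weight gives $[\Gamma u]_{\alpha,\beta}\le c_1+4\bar\Lambda C_L^\alpha K/\beta$. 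This is at most $K$ once $\beta\ge 8\bar\Lambda C_L^\alpha$ and $K\ge 2c_1$.

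Time-H\"older regularity, needed later for $C^{\alpha/2,\alpha}$, follows similarly from the time part of \eqref{eq.aprioi.spacetime} plus the Markov property, but $|v|_\alpha<\infty$ only requires the above. Since $\mathcal U_K$ is closed and invariant, and $\Gamma$ contracts there, the fixed point satisfies $|v|_\alpha\le M_0+e^{\beta T}K<\infty$.
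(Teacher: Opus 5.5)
Your proposal is correct and follows essentially the paper's argument: Banach's fixed point theorem in $(\mathcal U,\|\cdot\|_\beta)$ with contraction constant $4\bar\Lambda/\beta$. The H\"older bound $[\Gamma u]_{\alpha,\beta}\le c_1+4\bar\Lambda C_L^\alpha[u]_{\alpha,\beta}/\beta$ comes, as in the paper, from Lemma~\ref{lem:operator.calH}, the a priori estimate \eqref{eq.aprioi.spacetime} and Jensen, and then passes to the uniform limit of the Picard iterates by closedness. The differences are minor: you check $\Gamma(\mathcal U)\subseteq\mathcal U$ explicitly, which the paper leaves implicit, and you control the sup norm by a time-dependent Gronwall bound instead of the paper's invariant weighted ball $\|u\|_\beta\le 2\|\Gamma 0\|_\infty$ obtained with $\beta=8\bar\Lambda C_L^\alpha$.
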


\begin{proof}
Let $u_1,u_2\in\mathcal U$ and $(t,x,q)\in[0,T]\times\R\times \mathcal Q$.\
By definition of $\Gamma$ and Lemma \ref{lem:operator.calH}, we have
\begin{align*}
|(\Gamma u_1)(t,x,q)-&(\Gamma u_2)(t,x,q)|\\
&\quad\le
\int_t^T \E\big[\big|(\mathcal Hu_1)\big(s,L_s^{t,x},q\big)-(\mathcal Hu_2)\big(s,L_s^{t,x},q\big)\big|\big]\,ds\\
&\quad\leq 4\overline\Lambda\int_t^T \E\bigg[\sup_{\bar q\in \mathcal Q}\big|u_1\big(s,L_s^{t,x},\bar q\big)-u_2\big(s,L_s^{t,x},\bar q\big)\big|\bigg]\,ds.
\end{align*}
Hence,
\[
    \big|(\Gamma u_1)(t,x,q)-(\Gamma u_2)(t,x,q)\big|
    \le
    4\overline{\Lambda} \|u_1-u_2\|_\beta\int_t^T e^{\beta(T-s)}\,ds.
\]
Multiplying both sides by $e^{-\beta(T-t)}$ and taking the supremum over $(t,x,q)\in [0,T]\times \R\times \mathcal Q$, yields
\begin{equation}\label{eq:bound.beta.norm}
\|\Gamma u_1-\Gamma u_2\|_\beta\leq \frac{4\overline{\Lambda}}{\beta}\|u_1-u_2\|_\beta.
\end{equation}
Now, let $\alpha\in (0,1)$.\ Repeating a similar estimate with Lemma \ref{lem:operator.calH}, for $u\in \mathcal U$ with $[u]_{\alpha}<\infty$, $(t,q)\in [0,T]\times \mathcal Q$, and $x_1,x_2\in \R$, we also have
\begin{align*}
 e^{-\beta(T-t)}|(\Gamma u)(t,x_1,q)-&(\Gamma u)(t,x_2,q)| \\
 &\le C_L^\alpha\bigg([\varphi]_{\alpha}+\frac{C_{\mathcal H}\big(\|u\|_\beta\big)}{\beta}+\frac{4\overline\Lambda}{\beta}[u]_{\alpha,\beta}\bigg)|x_1-x_2|^\alpha,
\end{align*}
 where $C_L\geq 1$ is the constant from \eqref{eq.aprioi.spacetime}.\ This implies that, for $u\in \mathcal U$ with $[u]_{\alpha}<\infty$,
 \begin{equation}\label{eq:estimate.hoelder.beta}
 \big[\Gamma u\big]_{\alpha,\beta}\leq C_L^\alpha\bigg([\varphi]_{\alpha}+\frac{C_{\mathcal H}\big(\|u\|_\beta\big)}{\beta}+\frac{4\overline{\Lambda}}{\beta}[u]_{\alpha,\beta}\bigg).
 \end{equation}

Now, we choose $\beta>0$ such that $C_L^\alpha 4\overline\Lambda= \frac\beta2$.\ Then, by the triangle inequality and \eqref{eq:bound.beta.norm} with $u_1=u\in \mathcal U$ and $u_2=0$,
\[
\|\Gamma u\|_{\beta}\leq \|\Gamma 0\|_\infty+\frac12\| u\|_{\beta}.
\]
Iteratively, starting from $u_0=0\in \mathcal U$ and defining $u_{n}:=\Gamma u_{n-1}$ for all $n\in \N$, we thus obtain
\[
\|u_n\|_{\beta}\leq \|\Gamma 0\|_{\infty}\sum_{k=0}^{n-1} 2^{-k}\leq 2\|\Gamma 0\|_\infty\quad \text{for all }n\in \N,
\]
and therefore, by \eqref{eq:estimate.hoelder.beta},
\[
[u_n]_{\alpha,\beta}\leq C_L^\alpha\bigg([\varphi]_{\alpha}+\frac{C_{\mathcal H}\big(2\|\Gamma 0\|_\infty\big)}{\beta}\bigg)\sum_{k=0}^{n-1} 2^{-k}\leq C_{\alpha,\beta}\quad \text{for all }n\in \N
\]
with
\[
C_{\alpha,\beta}:=2C_L^\alpha\bigg([\varphi]_{\alpha}+\frac{C_{\mathcal H}\big(2\|\Gamma 0\|_\infty\big)}{\beta}\bigg).
\]
By Banach's fixed point theorem, the sequence $(u_n)_{n\in \N}\subset \mathcal U$ converges to the unique fixed point $v\in \mathcal U$ of $\Gamma$ w.r.t.\ the supremum norm.\ Since $[u_n]_{\alpha,\beta}\leq C_{\alpha,\beta}$, it follows that $$[v]_{\alpha}\leq e^{\beta T}[v]_{\alpha,\beta}\leq e^{\beta T}C_{\alpha,\beta}.$$
By definition of $\Gamma$, the fixed point $v$ is the unique mild solution to the transformed version of \eqref{eq:HJB_PDE_reduced_pred} with terminal condition
\eqref{eq:terminal_pred_reduced}.\ The proof is complete.
\end{proof}

\subsection{Existence and Uniqueness of a Classical Solution}\label{app:proof.classical}

Combining the regularity results in \cite[Chapter 9]{Krylov1996} with Theorem \ref{thm:existence_uniqueness_mild}, we are now in a position to prove Theorem \ref{thm.ex.unique.classical}.

\begin{proof}[Proof of Theorem \ref{thm.ex.unique.classical}]
In a first step, we prove that every classical solution $$v\in C^{1,2}\big([0,T)\times \R\times \mathcal Q\big)\cap C_{\rm b}\big([0,T]\times \R\times \mathcal Q\big)$$ to the abstract Cauchy problem
\begin{equation}\label{eq:lin.classical}
\begin{aligned}
 \partial_t v(t,x,q)+\big(\mathcal L_t v(t,\, \cdot\, ,q)\big)(x)&=h(t,x,q),\\
 v(T,x,q)&=\varphi(x,q),
\end{aligned}
\end{equation}
for $(t,x,q)\in [0,T)\times \R\times \mathcal Q$, where $h=\mathcal Hu$ with $u\in C_{\rm b}\big([0,T]\times \R\times \mathcal Q\big)$, satisfies $v=\Gamma u$.\ To that end, let $u\in C_{\rm b}\big([0,T]\times \R\times \mathcal Q\big)$ and $$v\in C^{1,2}\big([0,T)\times \R\times \mathcal Q\big)\cap C_{\rm b}\big([0,T]\times \R\times \mathcal Q\big)$$ such that \eqref{eq:lin.classical} is satisfied with $h=\mathcal Hu$.\ Since $u\in C_{\rm b}\big([0,T]\times \R\times \mathcal Q\big)$, it follows that $\mathcal H u\in C_{\rm b}\big([0,T]\times \R\times \mathcal Q\big)$.\ Fix $(t,x,q)\in[0,T)\times\mathbb{R}\times\mathcal{Q}$ and, as in the proof of Theorem \ref{thm:verification}, set
\[
R_n:=n+|x|,
\qquad
\theta_n:=T-\frac{T-t}{n+1},
\]
and
\[
\tau_n:=\inf\big\{s\in[t,T]:|L_s^{t,x}|\geq R_n\big\}\wedge\theta_n .
\]
By continuity of $L^{t,x}$ on $[t,T]$, we have \(\tau_n\uparrow T\) almost surely.\ Moreover,
\(\partial_xv\) is bounded on
\([t,\theta_n]\times[-R_n,R_n]\times\mathcal{Q}\), and hence
\[
\E\bigg[
  \int_t^{\tau_n}
  \partial_xv(s,L_s^{t,x},q)\sigma(s,L_s^{t,x})\,dW_s
\bigg]=0.
\]
Applying It\^o's formula on \([t,\tau_n]\) and using
\[
\partial_tv(s,y,q)
 +\bigl(\mathcal{L}_sv(s,\,\cdot\,,q)\bigr)(y)
 =h(s,y,q)\qquad \text{for all }y\in \R,
\]
we obtain
\[
\mathbb{E}\bigl[v(\tau_n,L_{\tau_n}^{t,x},q)\bigr]
=
v(t,x,q)
+
\mathbb{E}\bigg[
  \int_t^{\tau_n}h(s,L_s^{t,x},q)\,ds
\bigg].
\]
Since $v$ and $h$ are bounded, dominated convergence and the terminal
condition $v(T,\, \cdot\,,\, \cdot\,)=\varphi$ yield
\[
v(t,x,q)
=
\mathbb{E}\bigl[\varphi(L_T^{t,x},q)\bigr]
-
\int_t^T
\mathbb{E}\bigl[(\mathcal Hu)(s,L_s^{t,x},q)\bigr]\,ds
=
(\Gamma u)(t,x,q).
\]
In particular, every classical solution $v$ to \eqref{eq:cp} satisfies $v=\Gamma v$, i.e., $v$ is a mild solution.\ By the uniqueness of the mild solution, uniqueness of a classical solution to \eqref{eq:cp} follows.

Now, let $v$ be the unique mild solution to the transformed version of \eqref{eq:HJB_PDE_reduced_pred} with terminal condition \eqref{eq:terminal_pred_reduced}.\ Since $[v]_{\alpha}<\infty$ and therefore $[\mathcal Hv]_\alpha<\infty$ by Lemma \ref{lem:operator.calH}, for $t_1,t_2\in [0,T]$ with $t_1\leq t_2$, it follows that
\begin{align*}
|v(t_1,x,q)-v(t_2,x,q)|&\leq \E\big[\big|\varphi\big(L_T^{t_1,x},q\big) - \varphi\big(L_T^{t_2,x},q\big)\big|\big]\\
&\quad+\int_{t_2}^T\E\big[\big|(\mathcal H v)\big(s,L_{s}^{t_1,x},q\big)-\big(\mathcal H v\big)(s,L_{s}^{t_2,x},q)\big|\big] \, ds\\
&\quad +\int_{t_1}^{t_2} \E\big[\big|(\mathcal H v)\big(s,L_{s}^{t_1,x},q\big)\big|\big]\, ds\\
&\leq C_L^\alpha[\varphi]_\alpha|t_1-t_2|^{\alpha/2}\\
&\quad +[\mathcal Hv]_{\alpha} \int_{t_2}^T\E\big[\big|L_{s}^{t_1,x}-L_{s}^{t_2,x}\big|\big]^\alpha\, ds\\
&\quad +\|\mathcal Hv\|_\infty |t_2-t_1|\\
&\leq C_L^\alpha \big([\varphi]_\alpha+T[\mathcal Hv]_{\alpha}\big) |t_2-t_1|^{\alpha/2}+\|\mathcal Hv\|_\infty |t_2-t_1|,
\end{align*}
where, in the second-to-last step, we used Jensen's inequality.\ Since $v$ is bounded, we thus find that $v\in C^{\alpha/2,\alpha}\big([0,T]\times \R\times \mathcal Q\big)$ and, by Lemma \ref{lem:operator.calH}, it follows that $\mathcal Hv\in C^{\alpha/2,\alpha}\big([0,T]\times \R\times \mathcal Q\big)$.

Since $\varphi\in C^{2+\alpha}(\R\times \mathcal Q)$ and $\mathcal Hv\in C^{\alpha/2,\alpha}\big([0,T]\times \R\times \mathcal Q\big)$, by \citep[Theorem 9.2.3]{Krylov1996}, there exists a classical solution
\[
w \in C^{1+\alpha/2,2+\alpha}\big([0,T)\times \R\times \mathcal Q\big)\cap C^{\alpha/2,\alpha}\big([0,T]\times \R\times \mathcal Q\big)
\]
to \eqref{eq:lin.classical} with $h=\mathcal H v$.\ Since $v$ is a mild solution, by the first part of the proof, it follows that $w=v$ is the unique classical solution to \eqref{eq:cp}.

Applying the inverse of $f$, we obtain that  $V(t,p,q):=v\big(t,f^{-1}(p),q\big)$ for $t\in [0,T]$, $p\in (0,1)$, and $q\in \mathcal Q$ yields a classical solution to \eqref{eq:HJB_PDE_reduced_pred} with terminal condition \eqref{eq:terminal_pred_reduced}.\ Since every classical solution to \eqref{eq:HJB_PDE_reduced_pred} with terminal condition \eqref{eq:terminal_pred_reduced} can be  transformed into a classical solution to \eqref{eq:cp}, uniqueness follows.\ The proof is complete. 
\end{proof}


\begin{thebibliography}{23}
\providecommand{\natexlab}[1]{#1}
\providecommand{\url}[1]{\texttt{#1}}
\expandafter\ifx\csname urlstyle\endcsname\relax
  \providecommand{\doi}[1]{doi: #1}\else
  \providecommand{\doi}{doi: \begingroup \urlstyle{rm}\Url}\fi

\bibitem[Abernethy et~al.(2013)Abernethy, Chen, and Vaughan]{AbernethyEtAl2013}
J.~Abernethy, Y.~Chen, and J.~W. Vaughan.
\newblock {E}fficient {M}arket {M}aking via {C}onvex {O}ptimization, and a
  {C}onnection to {O}nline {L}earning.
\newblock \emph{ACM Transactions on Economics and Computation}, 1\penalty0
  (2):\penalty0 Article 12, 2013.

\bibitem[Avellaneda and Stoikov(2008)]{AvellanedaStoikov2008}
M.~Avellaneda and S.~Stoikov.
\newblock {H}igh-{F}requency {T}rading in a {L}imit {O}rder {B}ook.
\newblock \emph{Quantitative Finance}, 8\penalty0 (3):\penalty0 217--224, 2008.

\bibitem[Baldacci et~al.(2021)Baldacci, Bergault, and
  Guéant]{BaldacciBergaultGueant2021}
B.~Baldacci, P.~Bergault, and O.~Guéant.
\newblock Algorithmic market making for options.
\newblock \emph{Quantitative Finance}, 21\penalty0 (1):\penalty0 85--97, 2021.

\bibitem[Barzykin et~al.(2023)Barzykin, Bergault, and
  Guéant]{BarzykinBergaultGueant2023}
A.~Barzykin, P.~Bergault, and O.~Guéant.
\newblock Algorithmic market making in dealer markets with hedging and market
  impact.
\newblock \emph{Mathematical Finance}, 33\penalty0 (1):\penalty0 41--79, 2023.

\bibitem[Berg et~al.(2008)Berg, Nelson, and Rietz]{BergNelsonRietz2008}
J.~E. Berg, F.~D. Nelson, and T.~A. Rietz.
\newblock Prediction market accuracy in the long run.
\newblock \emph{International Journal of Forecasting}, 24\penalty0
  (2):\penalty0 285--300, 2008.

\bibitem[Cartea and Jaimungal(2015)]{CarteaJaimungal2015}
A.~Cartea and S.~Jaimungal.
\newblock Risk {M}etrics and {F}ine {T}uning of {H}igh-{F}requency {T}rading
  {S}trategies.
\newblock \emph{Mathematical Finance}, 25\penalty0 (3):\penalty0 576--611,
  2015.

\bibitem[Cartea et~al.(2014)Cartea, Jaimungal, and
  Ricci]{CarteaJaimungalRicci2014}
A.~Cartea, S.~Jaimungal, and J.~Ricci.
\newblock {B}uy {L}ow, {S}ell {H}igh: A {H}igh {F}requency {T}rading
  {P}erspective.
\newblock \emph{SIAM Journal on Financial Mathematics}, 5\penalty0
  (1):\penalty0 415--444, 2014.

\bibitem[Cartea et~al.(2017)Cartea, Donnelly, and
  Jaimungal]{CarteaDonnellyJaimungal2017}
A.~Cartea, R.~Donnelly, and S.~Jaimungal.
\newblock Algorithmic {T}rading with {M}odel {U}ncertainty.
\newblock \emph{SIAM Journal on Financial Mathematics}, 8\penalty0
  (1):\penalty0 635--671, 2017.

\bibitem[Chen and Pennock(2007)]{ChenPennock2007}
Y.~Chen and D.~M. Pennock.
\newblock A {U}tility {F}ramework for {B}ounded-{L}oss {M}arket {M}akers.
\newblock In \emph{Proceedings of the Twenty-Third Conference on Uncertainty in
  Artificial Intelligence}, UAI'07, page 49–56, Arlington, Virginia, USA,
  2007. AUAI Press.

\bibitem[Cowgill and Zitzewitz(2015)]{CowgillZitzewitz2015}
B.~Cowgill and E.~Zitzewitz.
\newblock {C}orporate {P}rediction {M}arkets: {E}vidence from {G}oogle, {F}ord,
  and {F}irm {X}.
\newblock \emph{The Review of Economic Studies}, 82\penalty0 (4):\penalty0
  1309--1341, 2015.

\bibitem[Dalen(2026)]{Dalen2026}
S.~Dalen.
\newblock {T}oward {B}lack {S}choles for {P}rediction {M}arkets: {A} {U}nified
  {K}ernel and {M}arket {M}aker's {H}andbook.
\newblock \emph{Preprint}, arXiv:2510.15205, 2026.

\bibitem[El~Aoud and Abergel(2015)]{AbergelElAoud2015}
S.~El~Aoud and F.~Abergel.
\newblock {A} {S}tochastic {C}ontrol {A}pproach to {O}ption {M}arket {M}aking.
\newblock \emph{Market Microstructure and Liquidity}, 1\penalty0 (1):\penalty0
  1550006, 2015.

\bibitem[Gu{\'e}ant et~al.(2013)Gu{\'e}ant, Lehalle, and
  Fernandez-Tapia]{GueantLehalleFernandezTapia2013}
O.~Gu{\'e}ant, C.-A. Lehalle, and J.~Fernandez-Tapia.
\newblock Dealing with the {I}nventory {R}isk: A {S}olution to the {M}arket
  {M}aking {P}roblem.
\newblock \emph{Mathematics and Financial Economics}, 7:\penalty0 477--507,
  2013.

\bibitem[Guilbaud and Pham(2013)]{GuilbaudPham2013}
F.~Guilbaud and H.~Pham.
\newblock {O}ptimal {H}igh-{F}requency {T}rading with {L}imit and {M}arket
  {O}rders.
\newblock \emph{Quantitative Finance}, 13\penalty0 (1):\penalty0 79--94, 2013.

\bibitem[Guéant(2017)]{Gueant2017}
O.~Guéant.
\newblock {O}ptimal {M}arket {M}aking.
\newblock \emph{Applied Mathematical Finance}, 24\penalty0 (2):\penalty0
  112--154, 2017.

\bibitem[Hanson(2003)]{Hanson2003}
R.~Hanson.
\newblock {C}ombinatorial {I}nformation {M}arket {D}esign.
\newblock \emph{Information Systems Frontiers}, 5\penalty0 (1):\penalty0
  107--119, 2003.

\bibitem[Ho and Stoll(1981)]{HoStoll1981}
T.~Ho and H.~R. Stoll.
\newblock {O}ptimal {D}ealer {P}ricing under {T}ransactions and {R}eturn
  {U}ncertainty.
\newblock \emph{Journal of Financial Economics}, 9\penalty0 (1):\penalty0
  47--73, 1981.

\bibitem[Jusselin(2021)]{Jusselin2021}
P.~Jusselin.
\newblock {O}ptimal {M}arket {M}aking with {P}ersistent {O}rder {F}low.
\newblock \emph{SIAM Journal on Financial Mathematics}, 12\penalty0
  (3):\penalty0 1150--1200, 2021.

\bibitem[Krylov(1996)]{Krylov1996}
N.~V. Krylov.
\newblock \emph{Lectures on elliptic and parabolic equations in {H{\"o}lder}
  spaces}, volume~12 of \emph{Grad. Stud. Math.}
\newblock Providence, RI: AMS, American Mathematical Society, 1996.

\bibitem[Ng et~al.(2026)Ng, Peng, Tao, and Zhou]{NgEtAl2026}
H.~Ng, L.~Peng, Y.~Tao, and D.~Zhou.
\newblock {P}rice {D}iscovery and {T}rading in {M}odern {P}rediction {M}arkets.
\newblock \emph{Preprint}, SSRN 5331995, 2026.

\bibitem[Nyström et~al.(2014)Nyström, Ould~Aly, and
  Zhang]{NystromOuldAlyZhang2014}
K.~Nyström, S.~M. Ould~Aly, and C.~Zhang.
\newblock {M}arket {M}aking and {P}ortfolio {L}iquidation under {U}ncertainty.
\newblock \emph{International Journal of Theoretical and Applied Finance},
  17\penalty0 (5):\penalty0 1450034, 2014.

\bibitem[{\O}ksendal(2003)]{oksendal}
B.~{\O}ksendal.
\newblock \emph{Stochastic differential equations. {An} introduction with
  applications.}
\newblock Universitext. Berlin: Springer, 6th edition, 2003.

\bibitem[Polgreen et~al.(2007)Polgreen, Nelson, Neumann, and
  Weinstein]{PolgreenEtAl2007}
P.~M. Polgreen, F.~D. Nelson, G.~R. Neumann, and R.~A. Weinstein.
\newblock {U}se of {P}rediction {M}arkets to {F}orecast {I}nfectious {D}isease
  {A}ctivity.
\newblock \emph{Clinical Infectious Diseases}, 44\penalty0 (2):\penalty0
  272--279, 2007.

\end{thebibliography}

\end{document}